\newcommand{\Rec}{\mathop{\mathrm{Rec}}}
\newcommand{\dom}{\mathop{\mathrm{dom}}}
\newcommand{\comp}{\mathop{\mathrm{comp}}}
\newcommand{\vis}{\mathop{\mathrm{vis}}}
\begin{document}
\mainmatter

\title{Joining Transition Systems of Records: Some Congruency and Language-Theoretic Results}
\titlerunning{Joining Transition Systems of Records}
\toctitle{Joining Transition Systems of Records}

\author{Mohammad Izadi\inst{1}
\and Saeed Masoudian\inst{1}
\and Sahand Mozaffari\inst{1}}
\authorrunning{M.~Izadi, S.~Masoudian, S.~Mozaffari}
\tocauthor{M.~Izadi, S.~Masoudian, S.~Mozaffari}

\institute{Department of Computer Engineering, \ \\  Sharif University of Technology, Tehran, Iran.\ \\
 \email{izadi@sharif.edu}, \email{\{masoodian, smozaffari\}@ce.sharif.edu}}

\maketitle

\begin{abstract}
%TODO: shorter abstract
Büchi automaton of records (BAR) has been proposed as a basic operational semantics for Reo coordination language. It is an extension of Büchi automaton by using a set of records as its alphabet or transition labels. Records are used to express the synchrony between the externally visible actions of coordinated components modeled by BARs. The main composition operator on the set of BARs is called as \emph{join} which is the semantics of its counterpart in Reo. In this paper, we define the notion of \emph{labeled transition systems of records} as a generalization of the notion of BAR, abstracting away from acceptance or rejection of strings. Then, we consider four equivalence relations (semantics) over the set of labeled transition systems of records and investigate their congruency with respect to the join composition operator. In fact, we prove that the \emph{finite-traces-based}, \emph{infinite-traces-based}, and \emph{nondeterministic finite automata (NFA)-based} equivalence relations all are congruence relations over the set of all labeled transition systems of records with respect to the join operation. However, the equivalence relation using Büchi acceptance condition is not so. In addition, using these results, we introduce the language-theoretic definitions of the join operation considering both finite and infinite strings notions. Also, we show that there is no language-based and structure-independent definition of the join operation on Büchi automata of records.
\par\addvspace\baselineskip\noindent\keywordname\enspace\ignorespaces{Transition Systems of Records, Congruence Relations, Equivalence Relations, Semantics, Büchi Automata of Records, Language Theory, Reo Coordination Language, Join Operation}
\end{abstract}

\section{Introduction}
In a series of papers, Büchi Automaton of Records (BAR) has been proposed as a basic semantics formalism for coordination of computing systems~\cite{IB08,IBC08,Izadi-SoSym}. It has been shown that BAR not only is a suitable modeling formalism to be used as the semantics of all primitive and user-defined coordination strategies specified by the predefined channels and their composed nets in the language of Reo~\cite{Arb02}, but also it is basically more powerful to express context dependencies and fairness constraints in comparison with the first proposed operational semantics of Reo that is called as Constraint Automaton (for Constraint Automaton see~\cite{CA06} and for BAR and its expressiveness and comparisons see~\cite{Izadi-SoSym,Izadi-PhD}). 

A Büchi Automaton of Records is a standard Büchi Automaton in which each transition label (alphabet member) is a \emph{record} $r$ over a set of port names $\mathcal{N}$  and data set $\mathcal{D}$. A record, formally, is a partial function from  $\mathcal{N}$  to $ \mathcal{D}$ and, semantically, is a data structure for expressing the simultaneous data communication events over the set of ports: the ports that are in the record’s domain are able to communicate their assigned data simultaneously, while the other ports are blocked and not allowed to communicate.  Using Büchi acceptance condition, a BAR is considered as the acceptor of infinite strings (streams) of records.

To obtain complex coordination systems by composing simpler ones modeled by BARs, we can use the \emph{standard product} of Büchi automata which can be applied whenever the sets of the alphabets (transition labels) are the same and it will be the counterpart of the intersection of the automata languages. Moreover, using the richer structure of BAR’s alphabet, we can define a more general product operator that works if the alphabets of the automata are different. We call this as the \emph{join} operator which is the semantics counterpart of the syntactic join operator in Reo. In the case of BARs over the same set of records, the join and standard product operators are structurally and language theoretically the same. But in the case of BARs over different set of records (in fact, different but not necessarily disjoint sets of port names) the join operation models joining different coordinators using only their common ports. In the special case of BARs over disjointed alphabets, the join operator models the asynchronous product (interleaving the executions) of the first automata.

Obviously, abstracting away from the notion of Büchi acceptance condition and saving the idea of using records as the transition labels of a labeled transition system enables us to consider several other useful theoretical results and even problems in the field of formal semantics and their counterparts in the field of specification languages. Let us call this type of transition systems as \emph{labeled transition systems of records} (LTSR) and their composition operator using the rules of the joining of BARs again as the \emph{join} operator. Now, we are faced with a lot of useful semantic models and equivalence relations such as \emph{finite-traces-based}, \emph{finite automata (NFA-based)},  \emph{infinite-traces-based},  \emph{Büchi, Robin, Muller,… acceptance conditions based}, \emph{failure based} , and \emph{(bi)simulation} equivalence relations over the set of all labeled transition systems of records (for a survey on a large set  of these kinds of equivalence relations see~\cite{van-glabbeek-spectrum-I,van-glabbeek-spectrum-II}).  

Whenever an equivalence relation over a set of formal models is considered as their semantics and the models can be composed using a predefined composition operator, one of the main questions is the \emph{congruency} of the equivalence relation with respect to the composition operator. In fact, an equivalence relation is a congruent relation with respect to a composition operator if the replacement of a component of a composed model by an equivalent one should always yield a model which is equivalent with the original one.  This congruency satisfaction guarantees for example the applicability and practical usefulness of the selected equivalence relation and composition operator in compositional reduction (minimization) of models in the process of verification and model checking.

In this paper, we investigate the congruency of four equivalence relations over the set of labeled transition systems of records (LTRS) with respect to their join composition operator. The equivalence relations are:  \emph{finite-traces-based} relation by which two LTSRs are equivalent if their sets of finite traces are the same, \emph{infinite-traces-based} relation by which two LTSRs without any deadlock state are equivalent if their sets of infinite traces are the same, \emph{finite automata (NFA-based)} relation over the set of LTRSs with subsets of states as the accepting sets in which two LTSRs are equivalent if their sets of accepted finite traces are the same, and \emph{ Büchi automata based} relation over the set of LTRSs with subsets of accepting states in which two LTSRs are equivalent if their sets of infinite traces that satisfy the Büchi acceptance condition are the same. In fact, we show that the first three equivalence relations all save the congruency with respect to the join operator , however the last one does not so. In addition, we are interested to define the join operation over LTSRs as language theoretic operations respectively over the set of finite or infinite strings of records in each of the above mentioned semantic definitions (equivalence relations). We introduce the language theoretic definitions for the join operation for the first three cases and prove that for the case of Büchi automata of records there is no such language theoretic and structure independent definition. 

The paper proceeds as follows: in section 2 we introduce our used basic notations and definitions and review the required preliminaries. Section 3 dedicated to the main results, theorems and corollaries. In section 4, we present our conclusions and suggest some future works.

\section{Preliminaries}
In this section we introduce the basic notations and preliminaries that we use in the rest of this paper. Firstly, we shall define the notion of record, which constitutes main objects of this paper.

\begin{definition}
Let $\mathcal{N}$ be a set of port names and $\mathcal{D}$ be a set of data:
\begin{enumerate}
\item $\emph{Records}$ are partial functions $r: \mathcal{N} \rightarrow \mathcal{D}$. Here we call $\mathcal{N}$, \emph{name set} or \emph{label set} of $r$, and $\mathcal{D}$ is called its \emph{data set}.
\item Denoted by $\Rec_\mathcal{N}(\mathcal{D})$ is the set of all records with labels from $\mathcal{N}$ and data from $\mathcal{D}$.
\item For a record $r \in \Rec_\mathcal{N}(\mathcal{D})$, we write $\dom(r)$ for domain of $r$.
\item The special record with empty domain is called \emph{invisible record} and is denoted by $\tau$.
\item A \emph{stream of records} over name set $\mathcal{N}$ and data set $\mathcal{D}$ is a (possibly infinite) string of records from $\Rec_\mathcal{N}(\mathcal{D})$. Denoted by $\Rec_\mathcal{N}(\mathcal{D})^*$ and $\Rec_\mathcal{N}(\mathcal{D})^\omega$ are the sets of all finite and infinite streams of records of $\Rec_\mathcal{N}(\mathcal{D})$, respectively.
%\item A \emph{language} $L$ of finite (infinite, resp.) stream of records over the name set $\mathcal{N}$ and data set $\mathcal{D}$ is a set of finite (infinite, resp.) streams of records; that is $L \subseteq \Rec_\mathcal{N}(\mathcal{D})^*$ ($L \subseteq \Rec_\mathcal{N}(\mathcal{D})^\omega$, resp.).
\end{enumerate}
\end{definition}

Records are used to convey both positive and negative information. That is, only the ports in $\dom(r)$ can possibly exchange the data assigned to them, while the other ports can't perform any communication. A record's information, can be altered through the following operations.
\begin{definition}
Let $r \in \Rec_{\mathcal{N}}(\mathcal{D})$:
\begin{enumerate}
%\item Positive information of record $r$ can increase through \emph{update} operation, defined as $r[n := d] = r \cup \{ (n, d) \}$ where $n \in \mathcal{N} \setminus \dom(r)$ and $d \in \mathcal{D}$.
\item For $\mathcal{N'} \subseteq \mathcal{N}$, $r\downarrow_\mathcal{N'}$ is $r$'s restriction to $\mathcal{N'}$. This operation increases a record's negative information.
\item Restriction is defined similarly for strings as $w\downarrow_\mathcal{N'} = w'$, where $w'[i]$ is $w[i]\downarrow_\mathcal{N'}$.
\item \emph{visible portion} of a string $w$ is defined as the sequence of its visible symbols and is denoted by $\vis(w)$.
\end{enumerate}
\end{definition}

When two records are not in contradiction with respect to the positive information they hold, it may be desirable to aggregate their information through \emph{union} operation as defined below.
\begin{definition}
Let $r_1 \in \Rec_{\mathcal{N}_1}(\mathcal{D})$ and $r_2 \in \Rec_{\mathcal{N}_2}(\mathcal{D})$. Then:
\begin{enumerate}
\item We say $r_1$ and $r_2$ are \emph{compatible} records and denote it by $\comp(r_1, r_2)$ when:
\[
\dom(r_1) \cap \mathcal{N}_2 = \dom(r_2) \cap \mathcal{N}_1 \wedge \forall \left. n \in \dom(r_1) \cap \dom(r_2)\right. : r_1(n) = r_2(n)
\]

\item The \emph{union} of compatible records $r_1$ and $r_2$, denoted by $r_1 \cup r_2$, is a record over name set $\mathcal{N}_1 \cup \mathcal{N}_2$, defined as:
\[
(r_1 \cup r_2)(n) = \begin{cases}
r_1(n)	&	n \in \dom(r_1)	\\
r_2(n)	&	n \in \dom(r_2)	\\
\end{cases}
\]
\end{enumerate}
\end{definition}

It is often convenient to model a system with a graph-like entity called \emph{labeled transition system}. We now give the description of this abstract machine.

\begin{definition}
\begin{enumerate}
\item A \emph{labeled transition system} is a quadruple $\mathcal{M} = (Q, \Sigma, \Delta, Q_0)$, where $Q$ is a finite nonempty set of \emph{states}, $\Sigma$ is a finite nonempty set of symbols called \emph{alphabet}, $\Delta: Q \times \Sigma \rightarrow \mathcal{P}(Q)$ is \emph{transition function} and $Q_0 \subseteq Q$ is the nonempty set of \emph{initial states}. In the rest of this paper, we will use the abbreviation LTS.
\item A string $w$ (possibly infinite) on alphabet $\Sigma$ is said to be \emph{traceable} in $\mathcal{M}$ when there is a sequence of \emph{states}, $q_i \in Q$, called \emph{trace}, where $q_0 \in Q_0$ and $q_{i+1} \in \Delta(q_i, w[i])$.
\end{enumerate}
\end{definition}
Of special interest to us, are those LTSs whose alphabets are $\Sigma = \Rec_\mathcal{N}(\mathcal{D})$ for some name set $\mathcal{N}$ and data set $\mathcal{D}$. We refer to these as \emph{LTSs of streams of records}, or LTSR in short.

To capture the notion of liveness in a system, we define a more general model, called Büchi automaton.
\begin{definition}
\begin{enumerate}
\item A \emph{Büchi automaton} is a quintuple $\mathcal{M} = (Q, \Sigma, \Delta, Q_0, F)$, where $Q$ is a finite nonempty set of \emph{states}, $\Sigma$ is a finite nonempty set called \emph{alphabet}, a \emph{transition function} $\Delta: Q \times \Sigma \rightarrow \mathcal{P}(Q)$, the nonempty set of \emph{initial states} $Q_0 \subseteq Q$ and the nonempty set of \emph{final states} $F \subseteq Q$.
\item For finite string $w$ on alphabet $\Sigma$ of length $n$, we say $\mathcal{B}: Q_1 \xrightarrow{w} Q_2$ when $Q_2$ is the set of states $q_n$, such that there exists a finite sequence of states $q_i \in Q$, where $q_0 \in Q_1$ and $q_n \in Q_2$ and $q_{i+1} \in \Delta(q_i, w_i)$.
\item An infinite string $w$ on alphabet $\Sigma$ is said to be accepted in $\mathcal{M}$, when there exists a trace of $w$ in $\mathcal{B}$ which infinitely often visits final states.
\item The set of all accepted infinite strings in a Büchi automaton, denoted by $L_B(\mathcal{M})$, is called its \emph{language}.
\item $L_f(\mathcal{B})$ is the set of all finite strings $w$ on alphabet $\Sigma$, such that $\mathcal{B}: Q_0 \xrightarrow{w} Q'$ and $Q' \cap F \neq \emptyset$.
\end{enumerate}
\end{definition}
Basically, a Büchi automaton is an LTS equipped with a set of differentiated states, called final states, such that the whole system is considered to be live when it keeps visiting the set of final states. $\xrightarrow{w}$ captures the notion of reachability. $L_B$ is the set of infinite strings that keep the system live and $L_f$ is the language of system when regarded as an NFA.

Here again, we especially interested in those of Büchi automata defined on an alphabet of records. We call these \emph{Büchi automata on stream of records}, or BAR in short.

\begin{proposition}
\label{prop:lts-as-automaton}
A (possibly infinite) string $w$ is traceable in an LTS defined as $\mathcal{M} = (Q, \Sigma, \Delta, Q_0)$, if and only if it is in the language of the automaton $\mathcal{A} = (Q, \Sigma, \Delta, Q_0, Q)$.
\end{proposition}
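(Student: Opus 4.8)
The plan is to unpack the definitions on both sides and to observe that choosing the final-state set to be all of $Q$ trivializes both acceptance conditions of $\mathcal{A}$. Since the word ``language'' carries two meanings for a B\"uchi automaton---namely $L_f$ for finite strings and $L_B$ for infinite strings---I would split the argument according to whether $w$ is finite or infinite, and in each case show that acceptance in $\mathcal{A}$ coincides exactly with the existence of a trace in $\mathcal{M}$.

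For finite $w$, I would appeal directly to the definition of $\mathcal{A}: Q_0 \xrightarrow{w} Q'$. By definition, $Q'$ collects precisely the states $q_n$ reachable by some sequence $q_0, \dots, q_n$ with $q_0 \in Q_0$ and $q_{i+1} \in \Delta(q_i, w[i])$; this is word-for-word the definition of a trace of $w$ in $\mathcal{M}$. Hence $Q' \neq \emptyset$ if and only if $w$ is traceable in $\mathcal{M}$. Because $F = Q$ in $\mathcal{A}$, we have $Q' \cap F = Q' \cap Q = Q'$, so the membership condition $w \in L_f(\mathcal{A})$, namely $Q' \cap F \neq \emptyset$, reduces exactly to $Q' \neq \emptyset$. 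Chaining the two equivalences settles the finite case.

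For infinite $w$, the key observation is that the B\"uchi acceptance condition becomes vacuous once $F = Q$: any infinite trace $q_0, q_1, q_2, \dots$ has all of its states in $F$, so it visits $F$ at every step and therefore infinitely often. Thus $w \in L_B(\mathcal{A})$ if and only if some infinite trace of $w$ exists in $\mathcal{A}$, and such a trace is by the definitions exactly an infinite trace of $w$ in $\mathcal{M}$. This disposes of the infinite case and completes the proof.

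I do not anticipate a genuine obstacle: the entire content of the statement is that taking $F = Q$ collapses the acceptor viewpoint onto bare reachability, so that both the finite reading via $L_f$ and the infinite reading via $L_B$ degenerate to the plain existence of a trace. The only point that requires care is to keep the two notions of language separate and to match each against the corresponding (finite or infinite) reading of ``traceable,'' since the definition of traceability covers finite and infinite strings uniformly while the B\"uchi automaton supplies two distinct acceptance definitions.
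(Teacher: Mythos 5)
Your proof is correct; the paper states this proposition without any proof at all, treating it as immediate from the definitions, and your argument is exactly the definitional unpacking being taken for granted. Splitting by finite versus infinite $w$ and observing that $F=Q$ makes $Q'\cap F\neq\emptyset$ collapse to $Q'\neq\emptyset$ in the finite case and makes the B\"uchi condition vacuous in the infinite case is precisely the intended justification, so nothing is missing.
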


An equivalent model is the model of \emph{generalized Büchi automaton}, which sometimes gives a more convenient way to use.
\begin{definition}
\begin{enumerate}
\item A \emph{generalized Büchi automaton} is defined as a quintuple $\mathcal{M} = (Q, \Sigma, \Delta, Q_0, \mathcal{F})$, where $Q$ is a finite nonempty set of \emph{states}, $\Sigma$ is a finite nonempty set called \emph{alphabet}, a \emph{transition function} $\Delta: Q \times \Sigma \rightarrow \mathcal{P}(Q)$, the nonempty set of \emph{initial states} $Q_0 \subseteq Q$ and the family of sets of \emph{final states} $\mathcal{F} \subseteq \mathcal{P}(Q)$.
\item An infinite string $w$ on alphabet $\Sigma$ is said to be \emph{accepted} in $\mathcal{M}$, when $w$ has a trace in $\mathcal{B}$ which for each $F \in \mathcal{F}$ visits $F$ infinitely often.
\item Similarly, $L_f(\mathcal{B})$ is defined as the set of all finite strings $w$ on alphabet $\Sigma$, such that $Q' \cap \bigcap_{F \in \mathcal{F}} F \neq \emptyset$, where $\mathcal{B}: Q_0 \xrightarrow{w} Q'$.
\end{enumerate}
\end{definition}
\begin{proposition}
For every generalized Büchi automaton $B$, there exists a Büchi automaton $B'$, where $L_B(B) = L_B(B')$. \cite{Vardi automata approach to LTL,Thomas90}
\end{proposition}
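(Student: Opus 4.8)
The plan is to prove this by the standard ``counting/cycling'' construction that turns the conjunction of infinitely-often conditions encoded by $\mathcal{F} = \{F_1, \ldots, F_k\}$ into a single Büchi acceptance condition. First I would dispose of the degenerate cases: if $\mathcal{F} = \emptyset$ then the acceptance condition is vacuously satisfied by any infinite trace, so I take $B' = (Q, \Sigma, \Delta, Q_0, Q)$; and if some $F_i = \emptyset$ then no trace can visit $F_i$ infinitely often, so $L_B(B) = \emptyset$ and any automaton with empty language works. Hence I assume $\mathcal{F} = \{F_1, \ldots, F_k\}$ with each $F_i \neq \emptyset$ and $k \geq 1$.

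The construction keeps $k$ indexed copies of the state space together with a counter $i \in \{1, \ldots, k\}$ recording which accepting set the automaton is currently ``waiting to see.'' Formally, set $B' = (Q \times \{1,\ldots,k\}, \Sigma, \Delta', Q_0 \times \{1\}, F_1 \times \{1\})$, where from $(q, i)$ on symbol $a$ we may move to $(q', j)$ for every $q' \in \Delta(q, a)$, with $j = (i \bmod k) + 1$ if $q \in F_i$ and $j = i$ otherwise. Thus the counter advances from $i$ to $i+1$ cyclically exactly when a state of $F_i$ is encountered while waiting for $F_i$, and an accepting state of $B'$ is reached precisely when the counter wraps from $k$ back to $1$ through a state of $F_1$. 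Note $Q \times \{1,\ldots,k\}$ is finite and nonempty and $F_1 \times \{1\} \neq \emptyset$, so $B'$ is a legitimate Büchi automaton.

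For correctness I would establish the two containments by a run-to-run correspondence. The $Q$-component of any trace of $B'$ is a trace of $B$, and conversely every trace of $B$ lifts to a unique trace of $B'$, since the counter values are completely determined by the state sequence together with the starting value $1$. Given an accepting trace of $B$, each $F_i$ is visited infinitely often, so the counter can never get stuck at any value and must cycle through $1, 2, \ldots, k, 1, \ldots$ infinitely often; hence the lifted trace passes through $F_1 \times \{1\}$ infinitely often and is accepting in $B'$. For the converse, suppose the lifted trace visits $F_1 \times \{1\}$ infinitely often. Between two consecutive such visits the counter must advance from $1$ all the way around back to $1$, and each advance out of copy $i$ requires seeing a state of $F_i$; therefore every $F_i$ is visited infinitely often and the projected trace is accepting in $B$. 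Combining both directions yields $L_B(B) = L_B(B')$.

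The only genuinely delicate point is pinning down the transition rule so that the counter advances on exactly the right occasions---on seeing an $F_i$-state while in copy $i$, not merely on entering any accepting state regardless of copy---so that ``visiting $F_1 \times \{1\}$ infinitely often'' faithfully encodes ``visiting each $F_i$ infinitely often.'' Once this bookkeeping is fixed, the non-decreasing-modulo-$k$ behaviour of the counter makes both directions routine: the forward direction uses that a perpetually advancing counter wraps infinitely often, and the backward direction uses that one full wrap forces one visit to each $F_i$.
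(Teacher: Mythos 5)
Your proof is correct. Note, however, that the paper does not prove this proposition at all---it simply cites the standard literature (Vardi; Thomas), so there is no ``paper proof'' to compare against; what you have written out is precisely the classical counter (or ``copy-cycling'') construction that those references contain. Your bookkeeping is sound: the counter advances out of copy $i$ exactly upon seeing an $F_i$-state while in copy $i$, acceptance at $F_1 \times \{1\}$ then correctly encodes the conjunction of the $k$ infinitely-often conditions, and both directions of the run correspondence (determinism of the lifting, the ``stuck counter'' contradiction in the forward direction, and the ``one full wrap forces one visit to each $F_i$'' argument in the backward direction) are argued correctly, including the $k=1$ case where $(1 \bmod 1)+1 = 1$ keeps the counter fixed. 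You also handle two degenerate cases the citation glosses over but which matter under this paper's conventions: since the paper requires the final-state set of a B\"uchi automaton to be nonempty, your observation that $F_1 \times \{1\} \neq \emptyset$ (and the choice $F = Q$ when $\mathcal{F} = \emptyset$) is needed for $B'$ to be well-formed; for the case of some $F_i = \emptyset$ one should exhibit a concrete empty-language automaton, e.g.\ a single initial final state with $\Delta'(q,a) = \emptyset$ for all $a$, which your remark leaves implicit but is trivially filled in.
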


Finally, we will define the join operator, with which this paper is mainly concerned.
\begin{definition}
Let
$\mathcal{B}_i = (Q_i, \Rec_{\mathcal{N}_i}(\mathcal{D}), \Delta_i, Q_{0i}, F_i)$
for $i=1,2$ be two BARs. Denoting by
$\mathcal{B}_1 \Bowtie \mathcal{B}_2$,
we define the join of $\mathcal{B}_1$ and $\mathcal{B}_2$ as the generalized Büchi automaton:
\[
\mathcal{B}_1 \Bowtie \mathcal{B}_2 = (Q_1 \times Q_2, {\textstyle \Rec_{\mathcal{N}_1 \cup \mathcal{N}_2}(\mathcal{D})}, \Delta, Q_{01} \times Q_{02}, \{ F_1 \times Q_2, Q_1 \times F_2 \})
\]
where $\Delta$ is given by the following rules:
\begin{prooftree}
\AxiomC{$\Delta_1(q_1, r_1)=q'_1$}
\AxiomC{$\Delta_2(q_2, r_2) = q'_2$}
\AxiomC{$\comp(r_1, r_2)$}
\TrinaryInfC{$\Delta((q_1, q_2), r_1 \cup r_2) = (q'_1, q'_2)$}
\end{prooftree}
\begin{prooftree}
\AxiomC{$\Delta_1(q_1, r_1)=q'_1$}
\AxiomC{$\dom(r_1) \cap \mathcal{N}_2 = \emptyset$}
\BinaryInfC{$\Delta((q_1, q_2), r_1 \cup r_2) = (q'_1, q'_2)$}
\end{prooftree}
\begin{prooftree}
\AxiomC{$\Delta_1(q_2, r_2)=q'_2$}
\AxiomC{$\dom(r_2) \cap \mathcal{N}_1 = \emptyset$}
\BinaryInfC{$\Delta((q_1, q_2), r_1 \cup r_2) = (q'_1, q'_2)$}
\end{prooftree}

The less common but more convenient way to define $\Delta$ is the following.
\[
\Delta\left( (q_1, q_2), r \right) = \begin{cases}
\left( \Delta(q_1, r\downarrow_{\mathcal{N}_1}), q_2 \right)	&	\mathcal{N}_2 \cap \dom(r) = \emptyset	\\
\left( q_1, \Delta(q_2, r\downarrow_{\mathcal{N}_2}) \right)	&	\mathcal{N}_1 \cap \dom(r) = \emptyset	\\
\left( \Delta(q_1, r\downarrow_{\mathcal{N}_1}), \Delta(q_2, r\downarrow_{\mathcal{N}_2}) \right)	&	\text{otherwise}
\end{cases}
\]
\end{definition}
We will make use of the latter equivalent definition of $\Delta$ in this text.

Proposition~\ref{prop:lts-as-automaton}, states an equivalence between LTSRs and BARs. By this equivalence, join operation can be defined similarly for LTSRs.

The rest of this paper is concerned with the congruency of common equivalences on LTSRs adn BARs with respect to join operation. Here is what we mean by congruency:
\begin{definition}
Let $\odot$ be a $k$-ary operator defined on a set $S$. We say $\approx$ is a \emph{congruence relation} on $S$ with respect to $\odot$ if it is an equivalence relation and $\odot(a_1, \ldots, a_k) \approx \odot(b_1, \ldots, b_k)$ whenever $a_i \approx b_i$ for $1 \leq i \leq k$.
\end{definition}

\begin{proposition}
\label{prop:congruency_binary_commutative}
Let $\odot$ be a binary commutative operator over a set $S$. An equivalence relation $\approx$ is a congruency on $S$ with respect to $\odot$, if and only if $a_1 \odot b \approx a_2 \odot b$ whenever $a_1 \approx a_2$.
\end{proposition}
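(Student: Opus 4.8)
The plan is to prove the two directions of the biconditional separately. The forward implication is essentially immediate from the definition of congruence, while the backward implication is where the commutativity hypothesis does the real work.

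For the forward direction, I would assume $\approx$ is a congruence with respect to $\odot$ and derive the one-sided replacement property. Given $a_1 \approx a_2$, I note that reflexivity of the equivalence relation $\approx$ yields $b \approx b$; feeding the two facts $a_1 \approx a_2$ and $b \approx b$ into the binary congruence condition (with $k = 2$) gives $a_1 \odot b \approx a_2 \odot b$ directly. No appeal to commutativity is needed here.

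For the backward direction, I would assume the one-sided property — that replacing the first argument by an equivalent one preserves $\approx$ — and establish full congruence, i.e.\ that $a_1 \approx a_2$ and $b_1 \approx b_2$ together imply $a_1 \odot b_1 \approx a_2 \odot b_2$. The idea is to change one argument at a time and chain the intermediate results by transitivity. First, since $a_1 \approx a_2$, the hypothesis gives $a_1 \odot b_1 \approx a_2 \odot b_1$. Next I must replace $b_1$ by $b_2$ in the second slot; since the hypothesis is stated only for the first slot, I would invoke commutativity: from $b_1 \approx b_2$ the hypothesis yields $b_1 \odot a_2 \approx b_2 \odot a_2$, and rewriting both sides through $x \odot y = y \odot x$ turns this into $a_2 \odot b_1 \approx a_2 \odot b_2$. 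Finally, transitivity of $\approx$ applied to $a_1 \odot b_1 \approx a_2 \odot b_1$ and $a_2 \odot b_1 \approx a_2 \odot b_2$ gives $a_1 \odot b_1 \approx a_2 \odot b_2$, as required.

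The only subtle point — and the single place where commutativity is essential — is the step that converts the given first-argument replacement property into a second-argument replacement property. Without commutativity the one-sided hypothesis would be strictly weaker than congruence, so the argument genuinely relies on rewriting $a_2 \odot b_i = b_i \odot a_2$. Everything else is a routine appeal to the reflexivity and transitivity of $\approx$, so I expect no real obstacle beyond carefully bookkeeping which argument is being varied at each step.
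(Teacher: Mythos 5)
Your proof is correct. Note that the paper itself states this proposition without any proof, treating it as routine; your argument --- forward direction from reflexivity of $\approx$ plus the definition of congruence, backward direction by changing one argument at a time, using commutativity to transfer the first-slot replacement property to the second slot, and chaining by transitivity --- is exactly the standard argument that fills this gap, and you correctly identify commutativity as the only place where a real hypothesis is consumed.
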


\section{Main Results}
In this section we present our main results, which are the proofs of congruency of some equivalence relations on LTSRs and BARs with respect to join operation. In addition, we use these results to give  language-based definitions of join operation in the cases where the congruency property holds.
Before we enter the main subject, we introduce some basic lemmas.
\begin{lemma}
\label{lem:join-state}
Let
$\mathcal{M}_i = (Q_i, \Rec_{\mathcal{N}_i}(\mathcal{D}), \Delta_i, Q_{0i})$
be LTSRs for $i = 1, 2$. Let $w$ be a (possibly infinite) string over alphabet $\Rec_{\mathcal{N}_1 \cup \mathcal{N}_2}(\mathcal{D})$ and define $w' = w\downarrow_{\mathcal{N}_1}$ and $w'' = w\downarrow_{\mathcal{N}_2}$. For any $j$,
$\mathcal{M}_1 \Bowtie \mathcal{M}_2: Q_{01} \times Q_{02} \xrightarrow{w[1..j]} Q'_1 \times Q'_2$
if and only if
$\mathcal{M}_1: Q_{01} \xrightarrow{\vis(w'[1..j])} Q'_1$
and
$\mathcal{M}_2: Q_{02} \xrightarrow{\vis(w''[1..j])} Q'_2$.
\end{lemma}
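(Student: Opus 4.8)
The plan is to prove the single, symmetric equation that underlies the stated equivalence: the set of states reachable in $\mathcal{M}_1 \Bowtie \mathcal{M}_2$ from $Q_{01}\times Q_{02}$ by $w[1..j]$ equals the Cartesian product of the set reachable in $\mathcal{M}_1$ from $Q_{01}$ by $\vis(w'[1..j])$ with the set reachable in $\mathcal{M}_2$ from $Q_{02}$ by $\vis(w''[1..j])$. Since the reachability relation $\xrightarrow{}$ assigns to a finite string a \emph{unique} reachable set, this one factorization delivers both directions of the lemma at once: the left-hand reachable set has the form $Q'_1\times Q'_2$ exactly when its two projections are the respective component reachable sets. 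Moreover, because the statement speaks only of finite prefixes $w[1..j]$ even when $w$ is infinite, no limiting argument is needed; I would prove the factorization by induction on $j$.

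The base case $j=0$ is immediate: the empty prefix reaches $Q_{01}\times Q_{02}$ in the join, while $\vis$ of the empty string is empty, giving factors $Q_{01}$ and $Q_{02}$. For the step, assuming the factorization $R_1\times R_2$ for $w[1..j]$, I would set $r=w[j+1]$, $a=r\downarrow_{\mathcal{N}_1}$, $b=r\downarrow_{\mathcal{N}_2}$, and split on how $\dom(r)$ meets $\mathcal{N}_1$ and $\mathcal{N}_2$, exactly as in the three clauses of the convenient definition of $\Delta$. The governing observation is that component~$i$ of the join advances, by $r\downarrow_{\mathcal{N}_i}$, precisely when $\dom(r)\cap\mathcal{N}_i\neq\emptyset$, i.e.\ when $r\downarrow_{\mathcal{N}_i}$ is visible, and otherwise holds its state; in lockstep, $\vis(w'[1..j+1])$ extends $\vis(w'[1..j])$ by $a$ when $a\neq\tau$ and is unchanged when $a=\tau$, and symmetrically for $b$. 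Applying $\Delta$ pointwise to $R_1\times R_2$ therefore yields another product whose first factor is $\bigcup_{p\in R_1}\Delta_1(p,a)$ when $a$ is visible and $R_1$ otherwise, and whose second factor is the analogous extension by $b$ --- which is exactly the pair of reachable sets for $\vis(w'[1..j+1])$ and $\vis(w''[1..j+1])$.

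The main obstacle is checking that the pointwise image of a product under $\Delta$ is again a product and that the visibility bookkeeping lines up in all clauses simultaneously; this is where $\vis$ earns its place, since a letter with $\dom(r)\cap\mathcal{N}_i=\emptyset$ must leave the $i$-th factor untouched, which is precisely the step that $\vis$ discards from $w'$ or $w''$. The one case I would pin down explicitly is the fully invisible letter $r=\tau$: here both $a$ and $b$ equal $\tau$ and $\vis$ drops the step on both sides, so the factorization survives only if reading $\tau$ does not enlarge a component's reachable set. I would therefore either restrict attention to streams not containing $\tau$ (note that $w'$ and $w''$ may still contain $\tau$, so $\vis$ remains essential) or record the convention that a $\tau$-step is stuttering, and observe that under it the first clause's treatment of $\tau$ is consistent with the lemma. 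With this edge case settled, the remaining two clauses are mirror images of the first by the symmetry between $\mathcal{M}_1$ and $\mathcal{M}_2$, and the induction closes.
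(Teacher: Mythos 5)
Your proposal is correct and follows essentially the same route as the paper's own proof: induction on the prefix length, with the inductive step split according to whether the current letter's projection onto $\mathcal{N}_i$ is visible, the governing fact being that component $i$ advances by $r\downarrow_{\mathcal{N}_i}$ exactly when $\dom(r)\cap\mathcal{N}_i\neq\emptyset$ and holds its state otherwise, in lockstep with what $\vis$ keeps or discards. The packaging differs slightly: you prove a single factorization equation for reachable sets, while the paper runs two separate inductions, one per implication; since $\xrightarrow{w}$ assigns a unique reachable set, your version delivers both directions at once and is, if anything, tidier. Where you genuinely add value is the fully invisible letter: the paper's proof disposes of the case $w'[k]=\tau$ by declaring $Q'_1=Q''_1$ ``self-evident,'' which is justified only when $\dom(w[k])$ is nonempty and disjoint from $\mathcal{N}_1$ (so that exactly the clause freezing component $1$ applies); when $w[k]=\tau$ itself, the paper's $\Delta$ either moves a component through its own $\tau$-transitions or, if no component has any, makes $w$ untraceable in the join while $\vis$ erases that letter on the component side, so the equivalence as literally stated can fail. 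Your explicit repair --- restricting to streams without the empty record, or adopting the stuttering convention for $\tau$ --- is precisely the assumption the paper's proof uses silently, so flagging it is an improvement rather than a gap in your argument.
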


\begin{proof}
First, we demostrate that if
$\mathcal{M}_1 \Bowtie \mathcal{M}_2: Q_{01} \times Q_{02} \xrightarrow{w[1..j]} Q'_1 \times Q'_2$
then
$\mathcal{M}_1: Q_{01} \xrightarrow{\vis(w'[1..j])} Q'_1$.
The statement about $\mathcal{M}_2$ follows from a similar argument.
We employ an induction on $j$. Induction basis for $j=1$ follows immediately from definition. Assume that the proposition holds for $i < k$. We aim to prove the case where $j=k$. We know in particular that:
\[
\mathcal{B}_1 \Bowtie \mathcal{B}_2: Q_{01} \times Q_{02} \xrightarrow{w[1..(k-1)]} Q'_1 \times Q'_2
\Rightarrow
\mathcal{B}_1: Q_{01} \xrightarrow{\vis(w'[1..(k-1)])} Q'_1
\]

Define $Q''_1$ and $Q''_2$ so that
$\mathcal{B}_1 \Bowtie \mathcal{B}_2: Q_{01} \times Q_{02} \xrightarrow{w[1..k]} Q''_1 \times Q''_2$.
If $w'[k] = \tau$, then $Q'_1 = Q''_1$ and the proposition is self-evident. Thus, assume that $w'[k] \neq \tau$. It remains to prove that:
\[
\mathcal{B}_1 \Bowtie \mathcal{B}_2: Q'_1 \times Q'_2 \xrightarrow{w[k]} Q''_1 \times Q''_2
\Rightarrow
\mathcal{B}_1: Q'_1 \xrightarrow{w'[k]} Q''_1
\]

Let $r_1 = w[k]\downarrow_{\mathcal{N}_1}$.
By assumption $w'[k] \neq \tau$. Thus $\dom(r) \cap \mathcal{N}_1 \neq \emptyset$. Hence $w'[k] = r\downarrow_{\mathcal{N}_1}$ and $Q''_1 = \Delta_{\mathcal{B}_1}(Q'_1, r\downarrow_{\mathcal{N}_1})$, by definition. Therefore the proposition holds.

To prove the converse statement, we again use induction on $j$. Induction basis is obvious for $j=1$. Suppose that the claim is valid for all $i<k$. Particularly:
\[
\left.\begin{gathered}
\mathcal{B}_1: Q_{01} \xrightarrow{\vis(w'[1..(k-1)])} Q'_1	\\
\mathcal{B}_2: Q_{02} \xrightarrow{\vis(w''[1..(k-1)])} Q'_2
\end{gathered}\right\}
\Rightarrow
\mathcal{B}_1 \Bowtie \mathcal{B}_2: Q_{01} \times Q_{02} \xrightarrow{w[1..(k-1)]} Q'_1 \times Q'_2
\]
Now define $Q''_1$ and $Q''_2$ so that $Q'_1 \xrightarrow{\vis(w'[k])} Q''_1$ and $Q'_2 \xrightarrow{\vis(w''[k])} Q''_2$. It suffices to prove that
$\mathcal{B}_1 \Bowtie \mathcal{B}_2: Q'_1 \times Q'_2 \xrightarrow{w[k]} Q''_1 \times Q''_2$.
If $w'[k] \cap \mathcal{N}_1 = \emptyset$, then $\vis(w'[k])$ is the empty string and $Q'_1 = Q''_1$. Otherwise $Q''_1 = \Delta_1(Q_1, \vis(w'[k]))$. A similar statement holds for $Q''_2$. It is easily observed from definition of $\Delta_{\mathcal{B}_1 \Bowtie \mathcal{B}_2}$ that
$\mathcal{B}_1 \Bowtie \mathcal{B}_2: Q'_1 \times Q'_2 \xrightarrow{w[k]} Q''_1 \times Q''_2$, in all cases. Thus the proof is complete.
\end{proof}

\begin{lemma}
\label{lem:main-lemma}
Let
$\mathcal{B}_i = (Q_i, \Rec_{\mathcal{N}_i}(\mathcal{D}), \Delta_i, Q_{0i})$
be BARs for $i = 1, 2, 3$. Then
$L_B(\mathcal{B}_2 \Bowtie \mathcal{B}_3) = L_B(\mathcal{B}_1 \Bowtie \mathcal{B}_3)$
for any arbitrary $\mathcal{B}_3$, if and only if
$L_B(\mathcal{B}_1)  = L_B(\mathcal{B}_2)$ and $L_f(\mathcal{B}_1) = L_f(\mathcal{B}_2)$.
\end{lemma}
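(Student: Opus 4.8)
The plan is to reduce the biconditional to a single structural fact, a \emph{decomposition of the join language}, and then read both directions off it. Writing $F_i$ for the final-state set of $\mathcal{B}_i$ (the tuple in the statement suppresses it), I would first upgrade Lemma~\ref{lem:join-state} from reachable state \emph{sets} after finite prefixes to full \emph{runs}. Because the convenient case-form of $\Delta$ updates each coordinate solely from that coordinate's own projected letter, or leaves it untouched when the letter is invisible to it, every run of $\mathcal{B}_1 \Bowtie \mathcal{B}_3$ on an infinite $w$ is exactly a pair $(\rho_1,\rho_3)$ of runs of the components on $u = \vis(w\downarrow_{\mathcal{N}_1})$ and $v = \vis(w\downarrow_{\mathcal{N}_3})$, and conversely any such pair assembles into a join run; crucially, the two runs may be chosen independently of one another.

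Granting this, I would read the generalized Büchi family $\{F_1 \times Q_3,\, Q_1 \times F_3\}$ componentwise: a join run visits $F_1 \times Q_3$ infinitely often iff its first coordinate visits $F_1$ infinitely often, and this is achievable iff $\rho_1$ is Büchi-accepting (when $u$ is infinite) or $\rho_1$ ends in $F_1$ (when $u$ is finite, since the first coordinate then stutters forever at $\rho_1$'s last state). Defining the \emph{combined acceptance} of a string $z$ by $\mathcal{B}$ to mean $z \in L_B(\mathcal{B})$ if $z$ is infinite and $z \in L_f(\mathcal{B})$ if $z$ is finite, independence of the two component runs yields the decomposition: $w \in L_B(\mathcal{B}_1 \Bowtie \mathcal{B}_3)$ iff $u$ is combined-accepted by $\mathcal{B}_1$ and $v$ is combined-accepted by $\mathcal{B}_3$. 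The direction ($\Leftarrow$) is then immediate: if $L_B(\mathcal{B}_1) = L_B(\mathcal{B}_2)$ and $L_f(\mathcal{B}_1) = L_f(\mathcal{B}_2)$ (so in particular $\mathcal{N}_1 = \mathcal{N}_2$, as comparability of the languages presupposes), the combined acceptance predicates of $\mathcal{B}_1$ and $\mathcal{B}_2$ coincide on every finite and infinite string, so the decomposition assigns $w$ the same membership in both joins, for every $\mathcal{B}_3$.

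For ($\Rightarrow$) I would argue by contraposition, using one uniform probe: let $\mathcal{B}_3$ be the one-state BAR over a fresh port $p \notin \mathcal{N}_1$ whose only transition is a self-loop on the record $s$ with $\dom(s) = \{p\}$, its single state both initial and final, so that $L_B(\mathcal{B}_3) = \{s^\omega\}$ and $s^n \in L_f(\mathcal{B}_3)$ for all $n$. If the $\omega$-languages differ, pick an infinite witness $u \in L_B(\mathcal{B}_1) \triangle L_B(\mathcal{B}_2)$ and feed the interleaving $w = u[0]\,s\,u[1]\,s\,u[2]\,s\cdots$; since $p$ is disjoint from $\mathcal{N}_1$ the first coordinate merely stutters on each $s$, the $\mathcal{N}_1$-projection is exactly $u$, and the $p$-projection is $s^\omega \in L_B(\mathcal{B}_3)$, so the decomposition places $w$ in precisely one of the two joins. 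If instead the finitary languages differ, pick a finite witness $u \in L_f(\mathcal{B}_1) \triangle L_f(\mathcal{B}_2)$ and feed $w = u\,s^\omega$: the $\mathcal{N}_1$-projection is now the \emph{finite} string $u$ while the infinite $s$-tail keeps only the third coordinate alive, so the decomposition tests $u$ against $L_f$ and again separates the joins. This second construction is the conceptual heart of the lemma -- $L_f$ becomes observable precisely because the partner keeps firing on a disjoint port after $\mathcal{B}_1$ has stopped contributing, which is also why plain Büchi equivalence by itself fails to be a congruence.

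The step I expect to be most delicate is the decomposition itself, and specifically the bookkeeping around invisible symbols. I must handle carefully the mixed regime where one projection is finite and the other infinite (the source of the $L_f$ clause), verify that the two component runs are genuinely independent so that existence of an accepting join run factors into a conjunction of existence statements, and ensure the probe witnesses can be taken $\tau$-free so that passing to $\vis$ does not silently alter their membership -- the natural reading, given that the join observes only visible behaviour. The disjoint-port stuttering in the ($\Rightarrow$) constructions is chosen deliberately to sidestep any dependence on $\tau$-self-loops in $\mathcal{B}_1$, since the frozen coordinate is frozen by the join rule itself rather than by a transition of $\mathcal{B}_1$. The remainder is a routine induction-and-limit argument lifting Lemma~\ref{lem:join-state} to runs.
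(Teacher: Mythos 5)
Your proposal is correct and follows essentially the same route as the paper's proof: the identical one-state probe $\mathcal{B}_3$ with a self-loop on a record over a disjoint port establishes necessity (with $us^\omega$ as the separating word for a finite witness), and sufficiency rests on the same finite/infinite case split of the visible projection, keyed to $L_f$ versus $L_B$ and lifted through Lemma~\ref{lem:join-state}. The only differences are organizational --- you package the sufficiency analysis as a standalone ``combined acceptance'' decomposition of the join language and interleave the infinite witness with $s$ rather than feeding it directly --- and you share the same glossed-over subtleties as the paper (the reduction to $\mathcal{N}_1 = \mathcal{N}_2$ and the bookkeeping for $\tau$-records), the latter of which you at least flag explicitly.
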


\begin{proof}
\textbf{Necessity of $L_f(\mathcal{B}_1) = L_f(\mathcal{B}_2)$:} First, we show that it is necessary for $\mathcal{B}_1$ and $\mathcal{B}_2$ to have the same language as NFAs. Assume that for some $\mathcal{B}_1$ and $\mathcal{B}_2$,
$L_f(\mathcal{B}_1) \neq L_f(\mathcal{B}_2)$.
we show that there exists a Büchi automaton $\mathcal{B}_3$ such that
$\mathcal{B}_1 \Bowtie \mathcal{B}_3 \neq \mathcal{B}_2 \Bowtie \mathcal{B}_3$.
Define
$\mathcal{B}_3 = (\{q_{03}\}, \{f\}, \Delta_3, \{q_{03}\}, \{q_{03}\})$,
where
$f$ is a record whose domain is disjoint from $\mathcal{N}_1$ and $\mathcal{N}_2$
and let
$\Delta_3(q_{03}, f) = \{q_{03}\}$.
Figure~\ref{fig:language_congruency_necessary_condition} depicts this automaton.

\begin{figure}
\centering
\begin{tikzpicture}
\node[initial,state,accepting] (A) {$q_0$};
\path (A) edge[loop right] node{$f$} (A);
\end{tikzpicture}
\caption{Automaton $\mathcal{B}_3$ in proof of lemma~\ref{lem:join-state}}
\label{fig:language_congruency_necessary_condition}
\end{figure}
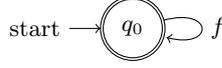

Without loss of generality, assume that there exists a finite word $w \in L_f(\mathcal{B}_1)$ such that $w \not\in L_f(\mathcal{B}_2)$.
Define $w' = wf^\omega$. If
$\mathcal{B}_1: Q_{01} \xrightarrow{w} Q'_1$
and
$\mathcal{B}_2: Q_{02} \xrightarrow{w} Q'_2$,
then $Q'_1 \cap F_1 \neq \emptyset$ and $Q'_2 \cap F_2 = \emptyset$. Using lemma~\ref{lem:join-state}, we derive:
\[
\mathcal{B}_1 \Bowtie \mathcal{B}_3: Q_{01} \times \{q_{03}\} \xrightarrow{w} Q'_1 \times \{q_{03}\}
\wedge
\mathcal{B}_2 \Bowtie \mathcal{B}_3: Q_{02} \times \{q_{03}\} \xrightarrow{w} Q'_2 \times \{q_{03}\}
\]
On the other hand, we know that $\Delta_1(q, f) = \{q\}$ for all $q \in Q_1$ and that $\Delta_2(q, f) = \{q\}$ for all $q \in Q_2$.
Thus the string $wf^\omega$ will make $\mathcal{B}_1$ make infinite loops on $Q'_1 \times \{q_{03}\}$ and will make $\mathcal{B}_2$ make infinite loops on the $Q'_2 \times \{q_{03}\}$. Thus
$wf^\omega \in L_B(\mathcal{B}_1 \Bowtie \mathcal{B}_3)$
and
$wf^\omega \not\in L_B(\mathcal{B}_2 \Bowtie \mathcal{B}_3)$.
Therefore
$L_B(\mathcal{B}_1 \Bowtie \mathcal{B}_3)$
and
$L_B(\mathcal{B}_2 \Bowtie \mathcal{B}_3)$
cannot be the same.

\textbf{Necessity of $L_B(\mathcal{B}_1)=L_B(\mathcal{B}_2)$:} Here we prove that $L_B(\mathcal{B}_1)=L_B(\mathcal{B}_2)$ is a necessary condition as well. Without loss of generality, assume that there is $w \in \mathcal{B}_1$ such that $w \not \in L_B(\mathcal{B}_2)$. Define $\mathcal{B}_3$ as before. One can easily see in the same way that $w \in \mathcal{B}_1 \Bowtie \mathcal{B}_3$ but $w \not\in \mathcal{B}_2 \Bowtie \mathcal{B}_3$. This necessitates $L_B(\mathcal{B}_1)=L_B(\mathcal{B}_2)$ for
$L_B(\mathcal{B}_1 \Bowtie \mathcal{B}_3) = L_B(\mathcal{B}_2 \Bowtie \mathcal{B}_3)$
to hold.

\textbf{Sufficiency of the aforementioned conditions:} Finally we demonstrate that the two conditions are also sufficient. Suppose $L_B(\mathcal{B}_1)=L_B(\mathcal{B}_2)$ and $L_f(\mathcal{B}_1)=L_f(\mathcal{B}_2)$. If there is any path from $Q_{01}$ to any final state of $\mathcal{B}_1$, which contains a record $r \not\in \Rec_{\mathcal{N}_2}(\mathcal{D})$, then there is a finite word $w \in L_f(\mathcal{B}_1)$ such that $w \not\in L_f(\mathcal{B}_2)$, contradicting the assumption. Hence, we may assume that $\Rec_{\mathcal{N}_1}(\mathcal{D}) \subseteq \Rec_{\mathcal{N}_2}(\mathcal{D})$. For the same reason, we assume $\Rec_{\mathcal{N}_2}(\mathcal{D}) \subseteq \Rec_{\mathcal{N}_1}(\mathcal{D})$ and conclude that $\mathcal{N}_1 = \mathcal{N}_2$.

We claim that
$L_B(\mathcal{B}_1 \Bowtie \mathcal{B}_3) \subseteq L_B(\mathcal{B}_2 \Bowtie \mathcal{B}_3)$.
Take an arbitrary string $w$ from
$L_B(\mathcal{B}_1 \Bowtie \mathcal{B}_3)$,
and define strings
$w' = w\downarrow_{\mathcal{N}_1}$
and
$w''' = w\downarrow_{\mathcal{N}_3}$.
By definition, automaton $\mathcal{B}_1 \Bowtie \mathcal{B}_3$ visits a state $(f_1, q_3)$, infinitely many times during its operation on $w$, where $f_1 \in F_1$ and $q_3 \in Q_3$. According to lemma~\ref{lem:join-state}, $\mathcal{B}_1$ lands on $f_1$ during its operation on $\vis(w'[1..i])$ for infinitely many $i$. There are two cases for $w'$ to consider:
\begin{enumerate}
\item If $\vis(w')$ is finite, then there is an index $i$, such that $w'[i..\infty] = \tau^\omega$. In this case, $\mathcal{B}_1: Q_{01} \xrightarrow{\vis(w'[1..j])} Q'_1$ for all $j>i$, for some $Q'_1 \subseteq Q_1$. Since $Q'_1 \times Q_3$ is the only subset of $Q_1 \times Q_3$ visited infinitely many times by $\mathcal{B}_1 \Bowtie \mathcal{B}_3$, therefore $Q'_1 \cap F_1 \neq \emptyset$. Thus $\vis(w') \in L_f(\mathcal{B}_1) = L_f(\mathcal{B}_2)$. Hence, $\mathcal{B}_2$ will halt on a final state $f_2 \in F_2$ by its operation on $\vis(w')$. Thus
$\mathcal{B}_2: Q_{02} \xrightarrow{\vis(w'[1..j])} Q'_2$, where $Q'_2 \cap F_2 \neq \emptyset$.
As shown in lemma~\ref{lem:join-state}, if $\mathcal{B}_1 \Bowtie \mathcal{B}_3 : Q_{01} \times Q_{03} \xrightarrow{w[1..k]} Q'_1 \times Q'_3$, then $\mathcal{B}_3: Q_{03} \xrightarrow{\vis(w'''[1..k])} Q'_3$. Thus for every integer $k>i$, we have
$\mathcal{B}_2 \Bowtie \mathcal{B}_3 : Q_{02} \times Q_{03} \xrightarrow{w[1..k]} Q'_2 \times Q'_{j3}$
whenever
$\mathcal{B}_1 \Bowtie \mathcal{B}_3 : Q_{01} \times Q_{03} \xrightarrow{w[1..k]} Q'_1 \times Q'_{j3}$.
Since $Q'_2 \cap F_2 \neq \emptyset$, and $Q'_{j3} \cap F_3 \neq \emptyset$ for infinitely many $j$, then $w \in L_B(\mathcal{B}_2 \Bowtie \mathcal{B}_3)$.

\item Assume $\vis(w')$ is infinite. We know that there are infinitely many $i$ such that
$\mathcal{B}_1 \Bowtie \mathcal{B}_3: Q_{01} \times Q_{03} \xrightarrow{w[1..i]} Q'_{i1} \times Q'_{i3}$,
and $Q'_{i1} \cap F_1 \neq \emptyset$. Using this and lemma~\ref{lem:join-state}, we derive
$\vis(w') \in L_B(\mathcal{B}_1) = L_B(\mathcal{B}_2)$.
Therefore for infinitely many $j$,
$\mathcal{B}_2: Q_{02} \xrightarrow{\vis(w'[1..j])} Q'_{j2}$,
where $Q'_{j2} \cap F_2 \neq \emptyset$.
We also know that there are infinitely many $k$ such that
$\mathcal{B}_1 \Bowtie \mathcal{B}_3: Q_{01} \times Q_{03} \xrightarrow{w[1..k]} Q'_{k1} \times Q'_{k3}$
and $Q'_{k3} \cap F_3 \neq \emptyset$.
Thus, by lemma~\ref{lem:join-state}, there are infinitely many $k$ such that
$\mathcal{B}_3 : Q_{03} \xrightarrow{\vis(w'''[1..k])} Q'_{k3}$,
where $Q'_{k3} \cap F_3 \neq \emptyset$.
For the last step of this proof, lemma~\ref{lem:join-state} indicates that there are infinitely many indices $j$, such that
$\mathcal{B}_2 \Bowtie \mathcal{B}_3: Q_{02} \times Q_{03} \xrightarrow{w[1..j]} Q'_{j2} \times Q'_{j3}$,
where $Q'_{j2} \cap F_2 \neq \emptyset$, and there are infinitely many integers $k$ such that
$\mathcal{B}_2 \Bowtie \mathcal{B}_3: Q_{02} \times Q_{03} \xrightarrow{w[1..k]} Q'_{k2} \times Q'_{k3}$,
where $Q'_{k3} \cap F_3 \neq \emptyset$. Thus $w \in L_B(\mathcal{B}_2 \Bowtie \mathcal{B}_3)$.
\end{enumerate}
This proves that
$L_B(\mathcal{B}_1 \Bowtie \mathcal{B}_3) \subseteq L_B(\mathcal{B}_2 \Bowtie \mathcal{B}_3)$.
A similar argument reveals that
$L_B(\mathcal{B}_2 \Bowtie \mathcal{B}_3) \subseteq L_B(\mathcal{B}_1 \Bowtie \mathcal{B}_3)$.
Together they yield
$L_B(\mathcal{B}_1 \Bowtie \mathcal{B}_3) = L_B(\mathcal{B}_2 \Bowtie \mathcal{B}_3)$,
thus proving the sufficiency of the aforementioned conditions.
\end{proof}

Let
$\mathcal{M}_i = (Q_i, \Rec_{\mathcal{N}_i}(\mathcal{D}),\Delta_i, Q_{0i})$
be LTSRs, for $i=1,2$. Finite-traces-based equivalence, denoted by $\mathcal{M}_1 \approx_{ft} \mathcal{M}_2$, holds whenever for any finite word $w$ on alphabet $\Rec_{\mathcal{N}_i}(\mathcal{D})$, $w$ is traceable in $\mathcal{M}_1$ if and only if it is traceable in $\mathcal{M}_2$.
In the following theorem we show that the finite-traces-based equivalence relation on the set of LTSRs is a congruency with  respect to join operation:
\begin{theorem}
The equivalence relation $\approx_{ft}$ is a congruence relation over the set of all LTSRs with respect to join operation.
\end{theorem}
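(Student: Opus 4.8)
The plan is to exploit the commutativity of $\Bowtie$ to reduce the congruence condition to its one-sided form, and then to read the traceable words of each join directly off Lemma~\ref{lem:join-state}. The join is commutative up to the state-pair renaming $(q,q') \mapsto (q',q)$, which leaves the set of traceable words unchanged (hence preserves $\approx_{ft}$), so the argument behind Proposition~\ref{prop:congruency_binary_commutative} reduces the task to showing that $\mathcal{M}_1 \approx_{ft} \mathcal{M}_2$ implies $\mathcal{M}_1 \Bowtie \mathcal{M}_3 \approx_{ft} \mathcal{M}_2 \Bowtie \mathcal{M}_3$ for every LTSR $\mathcal{M}_3$. By Proposition~\ref{prop:lts-as-automaton} I read ``$w$ is traceable in $\mathcal{M}$'' as ``$w \in L_f$ of the automaton obtained from $\mathcal{M}$ by declaring all states final,'' so that $\approx_{ft}$ is exactly equality of the sets of finite traceable words.

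Before the main computation I would fix the alphabets. Since $\mathcal{M}_1 \approx_{ft} \mathcal{M}_2$ forces their sets of finite traceable words to coincide, the reasoning of the sufficiency part of Lemma~\ref{lem:main-lemma} applies: a traceable word of $\mathcal{M}_1$ carrying a record whose domain meets $\mathcal{N}_1 \setminus \mathcal{N}_2$ could never be traceable in $\mathcal{M}_2$, whose transitions are labelled only by records of $\Rec_{\mathcal{N}_2}(\mathcal{D})$; by symmetry we may therefore take $\mathcal{N}_1 = \mathcal{N}_2 =: \mathcal{N}$. Both $\mathcal{M}_1 \Bowtie \mathcal{M}_3$ and $\mathcal{M}_2 \Bowtie \mathcal{M}_3$ are then LTSRs over the single alphabet $\Rec_{\mathcal{N} \cup \mathcal{N}_3}(\mathcal{D})$, so that comparing their traceable-word sets is well posed.

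The core step is then a one-line application of Lemma~\ref{lem:join-state}. Fix a finite word $w$ over $\Rec_{\mathcal{N} \cup \mathcal{N}_3}(\mathcal{D})$ and set $w' = w\downarrow_{\mathcal{N}}$ and $w''' = w\downarrow_{\mathcal{N}_3}$. Taking $j = \len(w)$ and letting $Q'_1 \times Q'_3$ be the set reached, the lemma gives that $w$ is traceable in $\mathcal{M}_1 \Bowtie \mathcal{M}_3$ exactly when $\vis(w')$ is traceable in $\mathcal{M}_1$ and $\vis(w''')$ is traceable in $\mathcal{M}_3$ (the reached product set is nonempty iff both factors are), and the identical criterion with $\mathcal{M}_2$ in place of $\mathcal{M}_1$ characterizes traceability in $\mathcal{M}_2 \Bowtie \mathcal{M}_3$. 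Since $\mathcal{M}_1 \approx_{ft} \mathcal{M}_2$, the clause ``$\vis(w')$ traceable in $\mathcal{M}_1$'' is equivalent to ``$\vis(w')$ traceable in $\mathcal{M}_2$,'' while the $\mathcal{M}_3$-clause is verbatim the same for both joins; hence $w$ is traceable in $\mathcal{M}_1 \Bowtie \mathcal{M}_3$ iff it is traceable in $\mathcal{M}_2 \Bowtie \mathcal{M}_3$. As $w$ was arbitrary, the two joins are $\approx_{ft}$-equivalent, and the reduction of the first paragraph finishes the proof.

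I expect the only genuine obstacle to be the name-set bookkeeping of the second paragraph, rather than the routine combinatorics of the third. The join rules branch on whether $\dom(r)$ meets $\mathcal{N}_1$ or $\mathcal{N}_3$, so a mismatch $\mathcal{N}_1 \neq \mathcal{N}_2$ would make $\mathcal{M}_1 \Bowtie \mathcal{M}_3$ and $\mathcal{M}_2 \Bowtie \mathcal{M}_3$ synchronise on different port sets even when $\mathcal{M}_1$ and $\mathcal{M}_2$ have equal traceable languages; reconciling the alphabets is thus the step on which congruence really rests. Once $\mathcal{N}_1 = \mathcal{N}_2$ is secured, the conjunctive shape of the traceability criterion supplied by Lemma~\ref{lem:join-state} makes the substitution of $\mathcal{M}_2$ for $\mathcal{M}_1$ immediate, with the only points of care being to restrict along $\mathcal{N}$ (not $\mathcal{N}_3$) and to retain the $\vis$ operators, which absorb the $\tau$-records that let the two factor runs differ in length.
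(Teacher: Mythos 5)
Your proof is correct and follows essentially the same route as the paper's: reduce to the one-sided condition via commutativity of $\Bowtie$ (Proposition~\ref{prop:congruency_binary_commutative}), normalize to $\mathcal{N}_1 = \mathcal{N}_2$, and then apply Lemma~\ref{lem:join-state} to characterize the traceable words of each join conjunctively in terms of the components, so that $\mathcal{M}_2$ can be substituted for $\mathcal{M}_1$. The only differences are presentational: you obtain both inclusions at once from the biconditional in Lemma~\ref{lem:join-state} where the paper argues one inclusion and appeals to symmetry, and you justify the alphabet normalization and the up-to-renaming commutativity slightly more explicitly than the paper does.
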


\begin{proof}
Let
$\mathcal{M}_i = (Q_i, \Rec_{\mathcal{N}_i}(\mathcal{D}), \Delta_i, Q_{0i})$
be LTSRs, and define the equivalent Büchi automata
$\mathcal{B}_i = (Q_i, \Rec_{\mathcal{N}_i}(\mathcal{D}),\Delta_i, Q_{0i}, Q_i)$,
for $i=1,2,3$. A finite string $w$ is traceable in $\mathcal{M}_i$ if and only if $w \in L_f(\mathcal{B}_i)$. Suppose $\mathcal{M}_1 \approx_{ft} \mathcal{M}_2$. It is safe to assume that $\mathcal{N}_1 = \mathcal{N}_2$. We show that
$\mathcal{M}_1 \Bowtie \mathcal{M}_3 \approx_{ft} \mathcal{M}_2 \Bowtie \mathcal{M}_3$.
For that, we must prove
$L_f(\mathcal{B}_1 \Bowtie \mathcal{B}_3) = L_f(\mathcal{B}_2 \Bowtie \mathcal{B}_3)$.

Let
$w \in L_f(\mathcal{B}_1 \Bowtie \mathcal{B}_3)$
be a finite string on
$\Rec_{\mathcal{N}_1 \cup \mathcal{N}_3}(\mathcal{D})$,
and define the string
$w' = \vis(w \downarrow_{\mathcal{N}_1})$.
By definition,
$Q'_1 \cap F_1 \neq \emptyset$ and $Q'_3 \cap F_3 \neq \emptyset$,
where
$\mathcal{B}_1 \Bowtie \mathcal{B}_3: Q_{01} \times Q_{03} \xrightarrow{w} Q'_1 \times Q'_3$.
Hence, according to lemma~\ref{lem:join-state},
$\mathcal{B}_1: Q_{01} \xrightarrow{w'} Q'_1$
and
$w' \in L_f(\mathcal{B}_1) = L_f(\mathcal{B}_2)$.
Thus $Q'_2 \cap F_2 \neq \emptyset$ where
$\mathcal{B}_2: Q_{02} \xrightarrow{w'} Q'_2$.
Therefore
$w \in L_f(\mathcal{B}_2 \Bowtie \mathcal{B}_3)$.
It follows that
$L_f(\mathcal{B}_1 \Bowtie \mathcal{B}_3) \subseteq L_f(\mathcal{B}_2 \Bowtie \mathcal{B}_3)$.
A similar argument yields
$L_f(\mathcal{B}_2 \Bowtie \mathcal{B}_3) \subseteq L_f(\mathcal{B}_1 \Bowtie \mathcal{B}_3)$.
Thus
$L_f(\mathcal{B}_1 \Bowtie \mathcal{B}_3) = L_f(\mathcal{B}_2 \Bowtie \mathcal{B}_3)$.
\end{proof}

This congruency makes it possible to define the language of finite strings of the join of two LTSRs, regardless of their exact structure, and by considering only their set of traceable finite strings.

\begin{theorem}
Let
$\mathcal{M}_i = (Q_i, \Rec_{\mathcal{N}_i}(\mathcal{D}),\Delta_i, Q_{0i})$
be LTSRs, for $i=1,2$. Then:
\[
\begin{aligned}
L_f(\mathcal{M}_1 \Bowtie \mathcal{M}_2) =
\{ w \in {\textstyle\Rec^*_{\mathcal{N}_1 \cup \mathcal{N}_2}}(\mathcal{D}) |
&\vis(w\downarrow_{\mathcal{N}_1}) \in L_f(\mathcal{M}_1)	\\
\wedge& \vis(w\downarrow_{\mathcal{N}_2}) \in L_f(\mathcal{M}_2) \}.
\end{aligned}
\]
\end{theorem}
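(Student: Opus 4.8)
The plan is to derive this characterization as an almost immediate corollary of Lemma~\ref{lem:join-state}, the only real work being to translate the statement about reachable state sets into a statement about membership in $L_f$.

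First I would fix the bridge between the LTSR view and the automaton view. By Proposition~\ref{prop:lts-as-automaton}, each LTSR $\mathcal{M}_i$ is identified with the Büchi automaton $\mathcal{B}_i = (Q_i, \Rec_{\mathcal{N}_i}(\mathcal{D}), \Delta_i, Q_{0i}, Q_i)$ in which every state is final. Under this identification the condition defining $L_f$, namely $Q' \cap F_i \neq \emptyset$, collapses to $Q' \neq \emptyset$; hence for a finite string $u$ one has $u \in L_f(\mathcal{M}_i)$ exactly when $u$ is traceable in $\mathcal{M}_i$, i.e. when the reached set in $\mathcal{M}_i: Q_{0i} \xrightarrow{u} Q'$ is nonempty. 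The same remark applies to the join: joining two all-final BARs makes the final-set family $\{F_1 \times Q_2, Q_1 \times F_2\}$ collapse to $\{Q_1 \times Q_2\}$, so $w \in L_f(\mathcal{M}_1 \Bowtie \mathcal{M}_2)$ iff the set reached by $w$ from $Q_{01} \times Q_{02}$ is nonempty. This reduces the theorem to a purely trace-theoretic claim.

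Next, for a finite $w \in \Rec^*_{\mathcal{N}_1 \cup \mathcal{N}_2}(\mathcal{D})$ I would apply Lemma~\ref{lem:join-state} with $j = \len(w)$. Writing $w' = w\downarrow_{\mathcal{N}_1}$ and $w'' = w\downarrow_{\mathcal{N}_2}$, the lemma says that $\mathcal{M}_1 \Bowtie \mathcal{M}_2: Q_{01} \times Q_{02} \xrightarrow{w} Q'_1 \times Q'_2$ holds precisely when $\mathcal{M}_1: Q_{01} \xrightarrow{\vis(w')} Q'_1$ and $\mathcal{M}_2: Q_{02} \xrightarrow{\vis(w'')} Q'_2$. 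The crucial observation, and the one step I would single out as the heart of the argument, is that the set reached in the join is not merely some subset of $Q_1 \times Q_2$ but an honest Cartesian product $Q'_1 \times Q'_2$; consequently it is nonempty if and only if both factors $Q'_1$ and $Q'_2$ are nonempty. Chaining this with the previous paragraph yields
\[
w \in L_f(\mathcal{M}_1 \Bowtie \mathcal{M}_2)
\iff Q'_1 \neq \emptyset \wedge Q'_2 \neq \emptyset
\iff \vis(w') \in L_f(\mathcal{M}_1) \wedge \vis(w'') \in L_f(\mathcal{M}_2),
\]
which is exactly the asserted set equality.

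I do not anticipate a genuinely hard step: once Lemma~\ref{lem:join-state} is in hand, both inclusions fall out simultaneously from the product-nonemptiness remark, so there is no need to argue the two directions separately. The only points demanding care are the collapse of the acceptance condition under the all-states-final identification, so that $L_f$ really does mean ``traceable'', and the fact that the $\vis$ operator on the restrictions is already built into the statement of the lemma, so that no extra bookkeeping about invisible records $\tau$ is needed at this stage.
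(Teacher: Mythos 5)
Your proposal is correct and follows essentially the same route as the paper's own proof: both reduce the claim to Lemma~\ref{lem:join-state} applied to the restrictions $\vis(w\downarrow_{\mathcal{N}_1})$ and $\vis(w\downarrow_{\mathcal{N}_2})$, then identify membership in $L_f$ with traceability via the all-states-final correspondence of Proposition~\ref{prop:lts-as-automaton}. Your version merely makes explicit two steps the paper leaves implicit (the collapse of the acceptance condition and the fact that the reached set in the join is a Cartesian product, hence nonempty iff both factors are), which is a fair tightening rather than a different argument.
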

\begin{proof}
Let $w \in \Rec^*_{\mathcal{N}_1 \cup \mathcal{N}_2}$ and define
$w' = \vis(w\downarrow_{\mathcal{N}_1})$
and
$w'' = \vis(w\downarrow_{\mathcal{N}_2})$.
By lemma~\ref{lem:join-state},
$\mathcal{M}_1 \Bowtie \mathcal{M}_2: Q_{01} \times Q_{02} \xrightarrow{w} Q'_1 \times Q'_2$
if and only if
$\mathcal{M}_1: Q_{01} \xrightarrow{w'} Q'_1$
and
$\mathcal{M}_2: Q_{02} \xrightarrow{w''} Q'_2$.
Thus $w$ is traceable in $\mathcal{M}_1 \Bowtie \mathcal{M}_2$, if and only if $w'$ is traceable in $\mathcal{M}_1$ and $w''$ is traceable in $\mathcal{M}_2$.
\end{proof}

Let
$\mathcal{M}_i = (Q_i, \Rec_{\mathcal{N}_i}(\mathcal{D}),\Delta_i, Q_{0i})$
be LTSRs, for $i=1,2$. $\mathcal{M}_1$ and $\mathcal{M}_2$ are said to be infinite-traces-based equivalent, denoted by $\mathcal{M}_1 \approx_{it} \mathcal{M}_2$, whenever for any infinite word $w$ on alphabet $\Rec_{\mathcal{N}_1 \cup \mathcal{N}_2}(\mathcal{D})$, $w$ is traceable in $\mathcal{M}_1$ if and only if it is traceable in $\mathcal{M}_2$.
Now we show that the infinite-traces-based equivalence relation on a certain subset of LTSRs is a congruency with respect to the join operation.
A trap state in an automaton is a state with no transition defined from it.
\begin{theorem}
Let
$\mathcal{M}_i = (Q_i, \Rec_{\mathcal{N}_i}(\mathcal{D}),\Delta_i, Q_{0i})$
The equivalence relation $\approx_{it}$ is a congruence relation on set of all labeled transition systems having no trap state, with respect to join operation.
\end{theorem}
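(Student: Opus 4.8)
The plan is to mirror the proof of the finite-traces congruence, using Lemma~\ref{lem:join-state} as the bridge between the join and its components, but with two extra ingredients needed to pass from finite prefixes to genuine infinite traces. Since $\Bowtie$ is commutative, Proposition~\ref{prop:congruency_binary_commutative} reduces the goal to showing that $\mathcal{M}_1 \approx_{it} \mathcal{M}_2$ implies $\mathcal{M}_1 \Bowtie \mathcal{M}_3 \approx_{it} \mathcal{M}_2 \Bowtie \mathcal{M}_3$ for an arbitrary trap-free $\mathcal{M}_3$. As in the finite-traces theorem, I would first argue that we may assume $\mathcal{N}_1 = \mathcal{N}_2$: if some name occurred on a reachable transition of one machine but lay outside the name set of the other, then, because there are no trap states, the offending finite trace could be continued to an infinite traceable word using that name, and this word would be traceable in one machine but not the other, contradicting $\mathcal{M}_1 \approx_{it} \mathcal{M}_2$. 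Writing $\mathcal{N} = \mathcal{N}_1 = \mathcal{N}_2$, fix an infinite word $w$ over $\Rec_{\mathcal{N} \cup \mathcal{N}_3}(\mathcal{D})$ and set $w' = w\downarrow_{\mathcal{N}}$ and $w''' = w\downarrow_{\mathcal{N}_3}$.

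The first extra ingredient is the observation that, on trap-free LTSRs, $\approx_{it}$ already forces agreement on finite traces. Indeed, if a finite word $u$ is traceable in $\mathcal{M}_1$, then because $\mathcal{M}_1$ has no trap state its trace can be continued forever, producing an infinite traceable word $uv$ over $\Rec_{\mathcal{N}}(\mathcal{D})$; by $\mathcal{M}_1 \approx_{it} \mathcal{M}_2$ this $uv$ is traceable in $\mathcal{M}_2$, and hence so is its prefix $u$. By symmetry $\mathcal{M}_1$ and $\mathcal{M}_2$ possess exactly the same traceable finite words in addition to the same traceable infinite words. This is precisely the step that fails once trap states are permitted, and it is the reason the theorem must be confined to the trap-free subset.

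The second ingredient lifts Lemma~\ref{lem:join-state}, which concerns only finite prefixes $w[1..j]$, to the infinite word $w$. Since every LTSR has a finite state set, the tree of partial traces of $w$ is finitely branching, so by König's lemma $w$ is traceable in $\mathcal{M}_1 \Bowtie \mathcal{M}_3$ exactly when every prefix $w[1..j]$ is traceable there. By Lemma~\ref{lem:join-state} the set of states reached after $w[1..j]$ is the product of the set reached by $\vis(w'[1..j])$ in $\mathcal{M}_1$ and the set reached by $\vis(w'''[1..j])$ in $\mathcal{M}_3$, so this product is nonempty for every $j$ iff $\vis(w'[1..j])$ is traceable in $\mathcal{M}_1$ and $\vis(w'''[1..j])$ is traceable in $\mathcal{M}_3$ for every $j$. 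A second application of König's lemma inside each component collapses ``traceable for every $j$'' to traceability of the full projection $\vis(w')$ in $\mathcal{M}_1$ and of $\vis(w''')$ in $\mathcal{M}_3$ --- read as infinite-traceability when that projection is infinite and as finite-traceability when it is finite.

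Assembling the pieces, $w$ is traceable in $\mathcal{M}_1 \Bowtie \mathcal{M}_3$ iff $\vis(w')$ is traceable in $\mathcal{M}_1$ and $\vis(w''')$ is traceable in $\mathcal{M}_3$; replacing $\mathcal{M}_1$ by $\mathcal{M}_2$ is now legitimate, the infinite case being handled by $\mathcal{M}_1 \approx_{it} \mathcal{M}_2$ directly and the finite case by the finite-trace agreement established above. Reassembling via Lemma~\ref{lem:join-state} for $\mathcal{M}_2 \Bowtie \mathcal{M}_3$ yields that $w$ is traceable in $\mathcal{M}_1 \Bowtie \mathcal{M}_3$ iff it is traceable in $\mathcal{M}_2 \Bowtie \mathcal{M}_3$, which is exactly $\mathcal{M}_1 \Bowtie \mathcal{M}_3 \approx_{it} \mathcal{M}_2 \Bowtie \mathcal{M}_3$. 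I expect the main obstacle to be the bookkeeping around a \emph{finite} visible projection $\vis(w')$: an infinite $w$ may drive one component only finitely often, and it is exactly here that the no-trap-state hypothesis (to recover finite-trace agreement) and the explicit distinction between finite and infinite traceability must both be invoked, rather than a naive appeal to $\approx_{it}$ alone.
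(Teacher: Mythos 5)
Your proof is correct and reaches the right conclusion, but it takes a genuinely different route from the paper's. The paper's own proof is very short: it observes (without proof) that on trap-free LTSRs $\approx_{it}$ is finer than $\approx_{ft}$, encodes each $\mathcal{M}_i$ as the all-states-final Büchi automaton $(Q_i, \Rec_{\mathcal{N}_i}(\mathcal{D}), \Delta_i, Q_{0i}, Q_i)$, and then simply invokes the sufficiency direction of Lemma~\ref{lem:main-lemma} (which already contains the case split on whether the visible projection is finite or infinite) together with Proposition~\ref{prop:congruency_binary_commutative}. You perform the same reduction via Proposition~\ref{prop:congruency_binary_commutative} and rely on the same key observation that trap-freeness forces agreement on finite traces --- which you actually prove by the extend-then-take-prefixes argument, whereas the paper dismisses it as ``easy to see'' --- but instead of citing Lemma~\ref{lem:main-lemma} you re-derive the needed special case directly: König's lemma converts traceability of an infinite word into traceability of all finite prefixes, Lemma~\ref{lem:join-state} decomposes each prefix into the two visible projections, and a second application of König's lemma reassembles per-component prefix traceability into traceability of the full (finite or infinite) projection. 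The paper's route buys brevity and reuse, since Lemma~\ref{lem:main-lemma} is needed anyway for the negative Büchi result; your route buys self-containment and is more rigorous at exactly the point where the paper is loosest: Lemma~\ref{lem:join-state} speaks only of finite prefixes and reached sets, and passing from ``reached sets are nonempty (or meet final sets) at infinitely many positions'' to ``there exists a single infinite run'' is precisely a König-type argument that the paper leaves implicit. One shared caveat: your reduction to $\mathcal{N}_1 = \mathcal{N}_2$ establishes only that all \emph{reachable} transitions use names in $\mathcal{N}_1 \cap \mathcal{N}_2$, not that the declared name sets coincide, and the join genuinely depends on the declared name sets (a record whose domain meets $\mathcal{N}_1$ but not $\mathcal{N}_2$ lets $\mathcal{M}_3$ move alone in $\mathcal{M}_2 \Bowtie \mathcal{M}_3$, yet forces $\mathcal{M}_1$ to fire a transition in $\mathcal{M}_1 \Bowtie \mathcal{M}_3$); this gap is inherited from the paper itself, whose Lemma~\ref{lem:main-lemma} and finite-traces theorem make the same ``safe to assume'' move, so it does not count against you relative to the paper's standard of rigor.
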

\begin{proof}
Suppose
$\mathcal{M}_1 \approx_{it} \mathcal{M}_2$.
It is easy to see that for LTSRs with no trap states, $\approx_{it}$ is a finer equivalence relation than $\approx_{ft}$. Thus for the equivalent Büchi automata,
$\mathcal{B}_i = (Q_i, \Rec_{\mathcal{N}_i}(\mathcal{D}), \Delta_i, Q_{0i}, Q_i)$,
we have $L_B(\mathcal{B}_1) = L_B(\mathcal{B}_2)$ and $L_f(\mathcal{B}_1) = L_f(\mathcal{B}_2)$. Therefore, as a result of lemma~\ref{lem:main-lemma}, for any trapless LTSR $\mathcal{M}_3$,
$L_B(\mathcal{M}_1 \Bowtie \mathcal{M}_3) = L_B(\mathcal{M}_2 \Bowtie \mathcal{M}_3)$.
As stated in proposition~\ref{prop:congruency_binary_commutative}, this is sufficient condition for $\approx_{it}$ to be a congruency with respect to join operation. This completes the proof.
\end{proof}

This congruency makes it possible to define the language of infinite strings of the automaton resulted by the join of two LTSRs, regardless of their exact structure, and by considering only their set of traceable infinite strings.
\begin{theorem}
Let
$\mathcal{M}_i = (Q_i, \Rec_{\mathcal{N}_i}(\mathcal{D}),\Delta_i, Q_{0i})$
be LTSRs, for $i=1,2$. Provided neither $\mathcal{M}_1$ nor $\mathcal{M}_2$ have any trap states:
\[
\begin{aligned}
L_B(\mathcal{M}_1 \Bowtie \mathcal{M}_2) =
\{ w \in {\textstyle\Rec^\omega_{\mathcal{N}_1 \cup \mathcal{N}_2}}(\mathcal{D}) |
&\vis(w\downarrow_{\mathcal{N}_1}) \in L_f(\mathcal{M}_1) \cup L_B(\mathcal{M}_1)	\\
\wedge& \vis(w\downarrow_{\mathcal{N}_2}) \in L_f(\mathcal{M}_2) \cup L_B(\mathcal{M}_2) \}
\end{aligned}
\]
\end{theorem}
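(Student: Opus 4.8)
The plan is to reduce both sides to statements about mere traceability and then glue finite‑prefix information into a single infinite trace by a compactness argument. First I would record the reductions forced by the LTSR setting. Since each $\mathcal{M}_i$ is an LTSR, its associated Büchi automaton has $F_i = Q_i$, so by Proposition~\ref{prop:lts-as-automaton} a finite (resp.\ infinite) string lies in $L_f(\mathcal{M}_i)$ (resp.\ $L_B(\mathcal{M}_i)$) exactly when it is traceable in $\mathcal{M}_i$. Moreover the accepting family of the join, $\{F_1 \times Q_2, Q_1 \times F_2\}$, collapses to $\{Q_1 \times Q_2\}$, so the generalized Büchi condition is vacuous and $L_B(\mathcal{M}_1 \Bowtie \mathcal{M}_2)$ is precisely the set of infinite strings traceable in $\mathcal{M}_1 \Bowtie \mathcal{M}_2$. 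Writing $\hat{w}_1 = \vis(w\downarrow_{\mathcal{N}_1})$ and $\hat{w}_2 = \vis(w\downarrow_{\mathcal{N}_2})$, the theorem then becomes: an infinite $w$ is traceable in the join iff $\hat{w}_1$ is traceable in $\mathcal{M}_1$ and $\hat{w}_2$ is traceable in $\mathcal{M}_2$, where ``traceable'' now admits both finite and infinite witnesses, which is exactly what the union $L_f \cup L_B$ encodes on the right‑hand side.

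The bridge in both directions is Lemma~\ref{lem:join-state}, which I would apply to every finite prefix. The key observation is that $\vis(w\downarrow_{\mathcal{N}_1}[1..j])$ is a prefix of $\hat{w}_1$, and as $j$ ranges over all naturals these prefixes exhaust $\hat{w}_1$ (similarly for $\hat{w}_2$). For the forward inclusion I assume $w$ traceable in the join; then each $w[1..j]$ is traceable, so the reachable set of the join on $w[1..j]$ is nonempty, and Lemma~\ref{lem:join-state} transfers this to nonemptiness of the reachable set of $\mathcal{M}_1$ on $\vis(w\downarrow_{\mathcal{N}_1}[1..j])$. Hence every finite prefix of $\hat{w}_1$ is traceable in $\mathcal{M}_1$. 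If $\hat{w}_1$ is finite this immediately gives $\hat{w}_1 \in L_f(\mathcal{M}_1)$; if $\hat{w}_1$ is infinite I would invoke König's lemma on the finitely branching tree of partial traces of $\hat{w}_1$ in $\mathcal{M}_1$ (finite because $Q_1$ is finite) to extract an infinite trace, yielding $\hat{w}_1 \in L_B(\mathcal{M}_1)$. The same argument handles $\hat{w}_2$.

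For the converse I assume $\hat{w}_1 \in L_f(\mathcal{M}_1) \cup L_B(\mathcal{M}_1)$ and $\hat{w}_2 \in L_f(\mathcal{M}_2) \cup L_B(\mathcal{M}_2)$. In either case every finite prefix of $\hat{w}_1$ (resp.\ $\hat{w}_2$) is traceable in $\mathcal{M}_1$ (resp.\ $\mathcal{M}_2$), since prefixes of a traceable string are traceable. For each $j$, Lemma~\ref{lem:join-state} then assembles the nonempty reachable sets of $\mathcal{M}_1$ on $\vis(w\downarrow_{\mathcal{N}_1}[1..j])$ and of $\mathcal{M}_2$ on $\vis(w\downarrow_{\mathcal{N}_2}[1..j])$ into a nonempty reachable set of the join on $w[1..j]$, so $w[1..j]$ is traceable in $\mathcal{M}_1 \Bowtie \mathcal{M}_2$ for every $j$. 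A final König argument, now on the finitely branching tree of partial traces of $w$ in the join (finite because $Q_1 \times Q_2$ is finite), produces an infinite trace of $w$, so $w \in L_B(\mathcal{M}_1 \Bowtie \mathcal{M}_2)$. Combining the two inclusions gives the stated equality.

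The step I expect to be the main obstacle is precisely the passage from ``all finite prefixes traceable'' to ``the infinite string is traceable.'' Lemma~\ref{lem:join-state} speaks only of finite prefixes and reachable sets, so it cannot by itself produce an infinite trace; the compactness (König's‑lemma) argument, leaning on finiteness of the state spaces, is what closes the gap. A secondary point requiring care is the finite‑versus‑infinite case split for $\hat{w}_1$ and $\hat{w}_2$: when, say, $\hat{w}_1$ is finite, from some point on no symbol of $w$ touches $\mathcal{N}_1$, so the first component simply freezes while the second continues, which is exactly why the union $L_f(\mathcal{M}_1) \cup L_B(\mathcal{M}_1)$ rather than $L_B(\mathcal{M}_1)$ alone must appear on the right. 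The trapless hypothesis plays the background role of guaranteeing that the deadlock‑free, infinite‑trace reading of $L_B$ is the intended one, so that these frozen‑or‑continuing traces are never spuriously blocked.
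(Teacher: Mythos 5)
Your proof is correct and follows essentially the same route as the paper's: both reduce the statement, via Lemma~\ref{lem:join-state}, to the equivalence between traceability of the infinite string $w$ in $\mathcal{M}_1 \Bowtie \mathcal{M}_2$ and traceability of the visible projections $\vis(w\downarrow_{\mathcal{N}_i})$ in the components, with the finite/infinite case split on these projections accounting for the unions $L_f(\mathcal{M}_i) \cup L_B(\mathcal{M}_i)$. The only difference is one of rigor: the paper applies the reachability notation $\xrightarrow{w}$ directly to the infinite string and leaves implicit the passage from prefix-wise reachability to the existence of an infinite trace, whereas you close that gap explicitly with the König's-lemma argument on the finitely branching trees of partial traces.
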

\begin{proof}
Let $w \in \Rec^\omega_{\mathcal{N}_1 \cup \mathcal{N}_2}$ and define
$w' = \vis(w\downarrow_{\mathcal{N}_1})$
and
$w'' = \vis(w\downarrow_{\mathcal{N}_2})$.
By lemma~\ref{lem:join-state},
$\mathcal{M}_1 \Bowtie \mathcal{M}_2: Q_{01} \times Q_{02} \xrightarrow{w} Q'_1 \times Q'_2$
if and only if
$\mathcal{M}_1: Q_{01} \xrightarrow{w'} Q'_1$
and
$\mathcal{M}_2: Q_{02} \xrightarrow{w''} Q'_2$.
Thus $w$ is traceable in $\mathcal{M}_1 \Bowtie \mathcal{M}_2$, if and only if $w'$ is traceable in $\mathcal{M}_1$ and $w''$ is traceable in $\mathcal{M}_2$; that is if and only if $w'$ is finite and in $L_f(\mathcal{M}_1)$ or is infinite and in $L_B(\mathcal{M}_1)$ and similarly for $w''$.
\end{proof}

Let
$\mathcal{B}_i = (Q_i, \Rec_{\mathcal{N}_i}(\mathcal{D}),\Delta_i, Q_{0i}, F_i)$
be BARs, for $i=1,2$. We say $\mathcal{B}_1 \approx_{f} \mathcal{B}_2$, whenever $L_f(\mathcal{B}_1) = L_f(\mathcal{B}_2)$.
\begin{theorem}
The equivalence relation $\approx_{f}$ is a congruence relation on set of all BARs with respect to join operation.
\end{theorem}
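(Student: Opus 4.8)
The plan is to reduce the claim, via Proposition~\ref{prop:congruency_binary_commutative} and the commutativity of $\Bowtie$, to showing that $L_f(\mathcal{B}_1) = L_f(\mathcal{B}_2)$ implies $L_f(\mathcal{B}_1 \Bowtie \mathcal{B}_3) = L_f(\mathcal{B}_2 \Bowtie \mathcal{B}_3)$ for every BAR $\mathcal{B}_3$. This mirrors the proof already given for $\approx_{ft}$, the only difference being that the final-state sets $F_i$ are now arbitrary rather than all of $Q_i$; accordingly I expect Lemma~\ref{lem:join-state} to again be the workhorse, since that lemma concerns only the reachability relation $\xrightarrow{w}$ and is therefore insensitive to the choice of accepting states. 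Before starting I would record that we may assume $\mathcal{N}_1 = \mathcal{N}_2$: exactly as in the sufficiency argument of Lemma~\ref{lem:main-lemma}, if some accepting path of $\mathcal{B}_1$ carried a record outside $\Rec_{\mathcal{N}_2}(\mathcal{D})$ it would produce a word in $L_f(\mathcal{B}_1) \setminus L_f(\mathcal{B}_2)$, contradicting the hypothesis, and symmetrically; this also makes the two joined alphabets $\Rec_{\mathcal{N}_1 \cup \mathcal{N}_3}(\mathcal{D})$ and $\Rec_{\mathcal{N}_2 \cup \mathcal{N}_3}(\mathcal{D})$ coincide, so that the two languages are comparable.

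The crux is a clean factorization of the finite-word acceptance condition of the join. Viewing $\mathcal{B}_1 \Bowtie \mathcal{B}_3$ as a generalized Büchi automaton with acceptance family $\{F_1 \times Q_3, Q_1 \times F_3\}$, the finite-string language $L_f$ requires the reachable set to meet the intersection $(F_1 \times Q_3) \cap (Q_1 \times F_3) = F_1 \times F_3$. By Lemma~\ref{lem:join-state} the set reached on $w$ factors as $Q'_1 \times Q'_3$ with $\mathcal{B}_1 : Q_{01} \xrightarrow{\vis(w\downarrow_{\mathcal{N}_1})} Q'_1$ and $\mathcal{B}_3 : Q_{03} \xrightarrow{\vis(w\downarrow_{\mathcal{N}_3})} Q'_3$, and intersecting with $F_1 \times F_3$ gives $(Q'_1 \cap F_1) \times (Q'_3 \cap F_3)$. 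This product is nonempty precisely when both factors are, so I would obtain that $w \in L_f(\mathcal{B}_1 \Bowtie \mathcal{B}_3)$ if and only if $\vis(w\downarrow_{\mathcal{N}_1}) \in L_f(\mathcal{B}_1)$ and $\vis(w\downarrow_{\mathcal{N}_3}) \in L_f(\mathcal{B}_3)$.

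With this characterization the conclusion is immediate: taking any $w \in L_f(\mathcal{B}_1 \Bowtie \mathcal{B}_3)$ gives $\vis(w\downarrow_{\mathcal{N}_1}) \in L_f(\mathcal{B}_1) = L_f(\mathcal{B}_2)$, and since $\mathcal{N}_1 = \mathcal{N}_2$ the same projection witnesses $w \in L_f(\mathcal{B}_2 \Bowtie \mathcal{B}_3)$; this yields one inclusion, and the symmetric argument yields the other, so the two languages coincide. I do not expect any serious obstacle, as the argument is a direct transcription of the $\approx_{ft}$ case. The only points that need care are the acceptance-set computation $(F_1 \times Q_3) \cap (Q_1 \times F_3) = F_1 \times F_3$ for the generalized Büchi join and the justification that Lemma~\ref{lem:join-state}, although stated for LTSRs, transfers verbatim because it speaks only of reachability; once these are in place the remainder is bookkeeping.
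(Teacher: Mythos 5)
Your proposal is correct and follows essentially the same route as the paper's proof: both reduce to one-sided replacement via Proposition~\ref{prop:congruency_binary_commutative} and rest entirely on Lemma~\ref{lem:join-state} together with the observation that the join's finite-string acceptance condition amounts to meeting $F_1 \times F_3$. The only difference is cosmetic: you derive the full biconditional characterization of $L_f(\mathcal{B}_1 \Bowtie \mathcal{B}_3)$ (which the paper states as its next, separate theorem) and read off congruence from it, whereas the paper chases the two inclusions directly.
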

\begin{proof}
Let
$\mathcal{B}_i = (Q_i, \Rec_{\mathcal{N}_i}(\mathcal{D}),\Delta_i, Q_{0i}, F_i)$
be  BARs for $i=1,2,3$, such that $L_f(\mathcal{B}_1) = L_f(\mathcal{B}_2)$. We show
$L_f(\mathcal{B}_1 \Bowtie \mathcal{B}_3) \subseteq L_f(\mathcal{B}_2 \Bowtie \mathcal{B}_3)$.
Let
$w \in L_f(\mathcal{B}_1 \Bowtie \mathcal{B}_3)$,
and define
$w' = \vis(w \downarrow_{\mathcal{N}_1})$
and
$w''' = \vis(w \downarrow_{\mathcal{N}_3})$.
Lemma~\ref{lem:join-state} states
$\mathcal{B}_1: Q_{01} \xrightarrow{w'} Q'_1$
and
$\mathcal{B}_3: Q_{03} \xrightarrow{w'''} Q'_3$,
whenever
$\mathcal{B}_1 \Bowtie \mathcal{B}_3: Q_{01} \times Q_{03} \xrightarrow{w} Q'_1 \times Q'_3$.
Therefore
$w' \in L_f(\mathcal{B}_1) = L_f(\mathcal{B}_2)$.
Hence
$\mathcal{B}_2: Q_{02} \xrightarrow{w'} Q'_2$,
where
$Q'_2 \cap F_2 \neq \emptyset$. We also know that $Q'_3 \cap F_3 \neq \emptyset$.
Thus, by lemma~\ref{lem:join-state},
$\mathcal{B}_2 \Bowtie \mathcal{B}_3: Q_{02} \times Q_{03} \xrightarrow{w} Q'_2 \times Q'_3$,
where
$Q'_2 \times Q'_3 \cap F_2 \times F_3 \neq \emptyset$.
Therefore
$w \in L_f(\mathcal{B}_2 \Bowtie \mathcal{B}_3)$.
It concludes that
$L_f(\mathcal{B}_1 \Bowtie \mathcal{B}_3) \subseteq L_f(\mathcal{B}_2 \Bowtie \mathcal{B}_3)$.
A similar argument yields
$L_f(\mathcal{B}_2 \Bowtie \mathcal{B}_3) \subseteq L_f(\mathcal{B}_1 \Bowtie \mathcal{B}_3)$.
As a result
$L_f(\mathcal{B}_1 \Bowtie \mathcal{B}_3) = L_f(\mathcal{B}_2 \Bowtie \mathcal{B}_3)$.
By proposition~\ref{prop:congruency_binary_commutative} this completes the proof.
\end{proof}

This congruency makes it possible to define the language of finite strings of join of arbitrary BARs, regardless of their structure, and by considering only their language of finite strings.
\begin{theorem}
Let
$\mathcal{B}_i = (Q_i, \Rec_{\mathcal{N}_i}(\mathcal{D}),\Delta_i, Q_{0i}, F_i)$
be BARs, for $i=1,2$. Then:
\[
\begin{aligned}
L_f(\mathcal{B}_1 \Bowtie \mathcal{B}_2) =
\{ w \in {\textstyle\Rec^*_{\mathcal{N}_1 \cup \mathcal{N}_2}}(\mathcal{D}) |
&\vis(w \downarrow_{\mathcal{N}_1}) \in L_f(\mathcal{M}_1)	\\
\wedge& \vis(w \downarrow_{\mathcal{N}_2}) \in L_f(\mathcal{M}_2)\}.
\end{aligned}
\]
\end{theorem}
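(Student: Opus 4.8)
The plan is to reduce the claim directly to Lemma~\ref{lem:join-state} together with the definition of $L_f$ for generalized Büchi automata; no induction or case analysis should be needed. First I would recall that, by definition, $\mathcal{B}_1 \Bowtie \mathcal{B}_2$ is the generalized Büchi automaton whose family of accepting sets is $\mathcal{F} = \{F_1 \times Q_2,\, Q_1 \times F_2\}$. Consequently, a finite string $w$ lies in $L_f(\mathcal{B}_1 \Bowtie \mathcal{B}_2)$ exactly when the reachable set $Q'$, given by $\mathcal{B}_1 \Bowtie \mathcal{B}_2 : Q_{01} \times Q_{02} \xrightarrow{w} Q'$, meets the intersection $\bigcap_{F \in \mathcal{F}} F = (F_1 \times Q_2) \cap (Q_1 \times F_2) = F_1 \times F_2$.

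Second, I would set $w' = \vis(w\downarrow_{\mathcal{N}_1})$ and $w'' = \vis(w\downarrow_{\mathcal{N}_2})$ and invoke Lemma~\ref{lem:join-state}, which identifies the reachable set of the join with the product of the component reachable sets: $Q' = Q'_1 \times Q'_2$, where $\mathcal{B}_1 : Q_{01} \xrightarrow{w'} Q'_1$ and $\mathcal{B}_2 : Q_{02} \xrightarrow{w''} Q'_2$. The crucial algebraic observation is then that
\[
(Q'_1 \times Q'_2) \cap (F_1 \times F_2) = (Q'_1 \cap F_1) \times (Q'_2 \cap F_2),
\]
which is nonempty if and only if both $Q'_1 \cap F_1 \neq \emptyset$ and $Q'_2 \cap F_2 \neq \emptyset$.

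Finally, I would translate these two nonemptiness conditions back into language membership: $Q'_1 \cap F_1 \neq \emptyset$ says precisely $w' \in L_f(\mathcal{B}_1)$, and $Q'_2 \cap F_2 \neq \emptyset$ says $w'' \in L_f(\mathcal{B}_2)$. Chaining the equivalences yields $w \in L_f(\mathcal{B}_1 \Bowtie \mathcal{B}_2)$ if and only if $\vis(w\downarrow_{\mathcal{N}_1}) \in L_f(\mathcal{B}_1)$ and $\vis(w\downarrow_{\mathcal{N}_2}) \in L_f(\mathcal{B}_2)$, which is the asserted equality.

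The main thing to get right — more a point of care than a genuine obstacle — is the computation of the acceptance condition for the join: one must use that $L_f$ of a generalized Büchi automaton is governed by the intersection $\bigcap_{F\in\mathcal{F}} F$ of its accepting sets, so that the two components of the product acceptance family collapse to $F_1 \times F_2$. This is exactly where the present statement differs from the earlier LTSR version of the theorem, in which every state was accepting and the final-state bookkeeping was vacuous. Once this reduction is in place, the result is a straightforward corollary of Lemma~\ref{lem:join-state}.
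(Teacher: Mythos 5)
Your proof is correct and follows essentially the same route as the paper's: both reduce the claim directly to Lemma~\ref{lem:join-state} applied to the full string $w$ and its projections $\vis(w\downarrow_{\mathcal{N}_1})$, $\vis(w\downarrow_{\mathcal{N}_2})$. In fact you are slightly more careful than the paper, which leaves implicit the step you spell out — that $L_f$ of the generalized Büchi automaton is governed by $\bigcap_{F\in\mathcal{F}}F = F_1 \times F_2$, so that $(Q'_1 \times Q'_2) \cap (F_1 \times F_2) \neq \emptyset$ iff both components meet their final sets.
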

\begin{proof}
Let $w \in \Rec^*_{\mathcal{N}_1 \cup \mathcal{N}_2}$ and define
$w' = \vis(w \downarrow_{\mathcal{N}_1})$
and
$w'' = \vis(w \downarrow_{\mathcal{N}_2})$. By lemma~\ref{lem:join-state},
$\mathcal{B}_1 \Bowtie \mathcal{B}_2: Q_{01} \times Q_{02} \xrightarrow{w} Q'_1 \times Q'_2$
if and only if
$\mathcal{B}_1: Q_{01} \xrightarrow{w'} Q'_1$
and
$\mathcal{B}_2: Q_{02} \xrightarrow{w''} Q'_2$.
Therefore $w$ is accepted in $\mathcal{B}_1 \Bowtie \mathcal{B}_2$, if and only if $w'$ is accepted in $\mathcal{B}_1$ and $w''$ is accepted in $\mathcal{B}_2$.
\end{proof}

Finally, we study the problem of congruency of the infinite Büchi-acceptance-based equivalence relation over the set of BARs with respect to the join operation. We demonstrate that infinite Büchi-acceptance-based equivalence does not enjoy congruency.
For sake of concreteness, let
$\mathcal{B}_i = (Q_i, \Rec_{\mathcal{N}_i}(\mathcal{D}),\Delta_i, Q_{0i}, F_i)$
be Büchi automata of records, for $i=1,2$; We say $\mathcal{B}_1 \approx_{B} \mathcal{B}_2$, whenever $L_B(\mathcal{B}_1) = L_B(\mathcal{B}_2)$.

\begin{theorem}
The equivalence relation $\approx_{B}$ is not a congruence relation on set of all Büchi automata of records with respect to join operation.
\end{theorem}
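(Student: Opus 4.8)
The plan is to refute congruency by exhibiting a single concrete counterexample, and the key leverage is Lemma~\ref{lem:main-lemma}. Since $\Bowtie$ is commutative, Proposition~\ref{prop:congruency_binary_commutative} tells us that to show $\approx_B$ fails to be a congruence it suffices to produce two BARs $\mathcal{B}_1$ and $\mathcal{B}_2$ with $\mathcal{B}_1 \approx_B \mathcal{B}_2$ (that is, $L_B(\mathcal{B}_1) = L_B(\mathcal{B}_2)$) together with one witness $\mathcal{B}_3$ for which $L_B(\mathcal{B}_1 \Bowtie \mathcal{B}_3) \neq L_B(\mathcal{B}_2 \Bowtie \mathcal{B}_3)$. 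Now Lemma~\ref{lem:main-lemma} states that $L_B(\mathcal{B}_1 \Bowtie \mathcal{B}_3) = L_B(\mathcal{B}_2 \Bowtie \mathcal{B}_3)$ holds for \emph{all} $\mathcal{B}_3$ exactly when both $L_B(\mathcal{B}_1) = L_B(\mathcal{B}_2)$ and $L_f(\mathcal{B}_1) = L_f(\mathcal{B}_2)$. Hence the entire task collapses to finding two BARs that agree on their $\omega$-languages but disagree on their finite-word languages; the lemma then guarantees that a separating $\mathcal{B}_3$ exists, and in fact I can read it off directly from the necessity direction of that proof.

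For the construction I would fix a single visible record $a$ over one port name set $\mathcal{N}$. Let $\mathcal{B}_1$ have a single state $q_0$ that is simultaneously initial and final, with $\Delta_1(q_0, a) = \{q_0\}$; then $L_f(\mathcal{B}_1) = \{a^n : n \geq 0\}$ and $L_B(\mathcal{B}_1) = \{a^\omega\}$. Let $\mathcal{B}_2$ have states $q_0$ (initial, non-final) and $q_1$ (final) with $\Delta_2(q_0, a) = \{q_1\}$ and $\Delta_2(q_1, a) = \{q_1\}$; then $L_B(\mathcal{B}_2) = \{a^\omega\}$ as well, but $L_f(\mathcal{B}_2) = \{a^n : n \geq 1\}$. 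The two automata thus have identical Büchi languages, so $\mathcal{B}_1 \approx_B \mathcal{B}_2$, yet their finite-word languages differ precisely on the empty word $\varepsilon$.

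To separate them I take $\mathcal{B}_3$ to be the one-state automaton with accepting initial state $q_{03}$ and a self-loop $\Delta_3(q_{03}, f) = \{q_{03}\}$, where $f$ is a record whose domain is disjoint from $\mathcal{N}$ (exactly the gadget from the proof of Lemma~\ref{lem:main-lemma}). Feeding the infinite word $f^\omega$ (which is $\varepsilon f^\omega$) into each join, the $f$-letters are invisible to the first component, so in $\mathcal{B}_1 \Bowtie \mathcal{B}_3$ the run stays at $(q_0, q_{03})$ forever, while in $\mathcal{B}_2 \Bowtie \mathcal{B}_3$ it likewise stays at $(q_0, q_{03})$, with $q_0$ never advancing to the final state $q_1$. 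I would then check the generalized Büchi condition with acceptance family $\{F_1 \times Q_3, Q_1 \times F_3\}$: for $\mathcal{B}_1 \Bowtie \mathcal{B}_3$ both sets reduce to $\{(q_0, q_{03})\}$ and are visited infinitely often, so $f^\omega$ is accepted; for $\mathcal{B}_2 \Bowtie \mathcal{B}_3$ the set $F_2 \times Q_3$ is never visited, so $f^\omega$ is rejected. This gives $L_B(\mathcal{B}_1 \Bowtie \mathcal{B}_3) \neq L_B(\mathcal{B}_2 \Bowtie \mathcal{B}_3)$ and completes the refutation.

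The one point demanding care — the main obstacle — is exactly this acceptance check: I must make sure that the sole finite-word discrepancy (acceptance of $\varepsilon$) genuinely propagates through the join into an $\omega$-word discrepancy. This hinges on the generalized Büchi requirement that \emph{both} component acceptance families be visited infinitely often, so that $\mathcal{B}_2$'s failure ever to mark its first component as final on the invisible suffix $f^\omega$ is fatal, whereas $\mathcal{B}_1$'s initial state being final makes the same suffix acceptable.
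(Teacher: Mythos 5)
Your proposal is correct and follows essentially the same route as the paper: both invoke the necessity direction of Lemma~\ref{lem:main-lemma} and exhibit two BARs with identical Büchi languages but different finite-string languages (the paper uses the two-state cycle automata of Figure~\ref{fig:counterexample-buchi-congruence}, you use a self-loop versus a chain differing only on $\varepsilon$). Your version is somewhat more self-contained, since you also explicitly construct the separating automaton $\mathcal{B}_3$ and verify the generalized Büchi acceptance on $f^\omega$ rather than relying on the lemma alone, but the underlying idea is identical.
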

\begin{proof}
According to lemma~\ref{lem:main-lemma}, it is not enough for two Büchi automata to have the same language of infinite strings, but their language of finite strings should be equal as well. In fact, figure~\ref{fig:counterexample-buchi-congruence} gives an example of two trapless Büchi automata, having the same language of infinite strings, but not the same language of finite strings. Thus $\approx_B$ is not a congruency relation over Büchi automata of records with respect to join operation.
\end{proof}

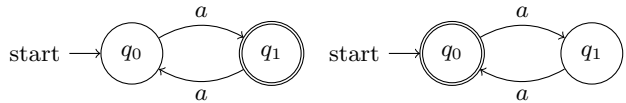
\begin{figure}
\centering
\begin{subfigure}
\centering
\begin{tikzpicture}[auto]
\node[initial,state] (A) {$q_0$};
\node[state,accepting] (B) [right=of A] {$q_1$};
\path (A) edge[->,bend left] node{$a$} (B);
\path (B) edge[->,bend left] node{$a$} (A);
\end{tikzpicture}
\end{subfigure}
\begin{subfigure}
\centering
\begin{tikzpicture}[auto]
\node[initial,state,accepting] (A) {$q_0$};
\node[state] (B) [right=of A] {$q_1$};
\path (A) edge[->,bend left] node{$a$} (B);
\path (B) edge[->,bend left] node{$a$} (A);
\end{tikzpicture}
\end{subfigure}
\caption{Two trapless Büchi automata, having the same language, but not the same language of finite strings}
\label{fig:counterexample-buchi-congruence}
\end{figure}

The above theorem suggests that there can be no join of two BARs cannot be defined as a function of their language of infinte strings.

\section{Conclusions}
The usefulness of an equivalence relation over a set of models in some theoretical-practical purposes such as compositional reasoning, model minimization with the aim of system or code refinement and verification by theorem proving or model checking, requires the equivalence relation to be a congruence relation with respect to the composition operators which are used to construct more complex systems out of the simpler ones~\cite{n16,Izadi-IJCM}. We have shown that using records as a mathematical notion for simultaneous data communications over a set of port names is a powerful way of expressing the external behaviors or coordination parts of computing systems. Thus, labeled transition systems of records are interesting models of coordination systems and we need to consider several equivalence relations over them as their formal semantics. In this paper, we considered four equivalence relations over the set of labeled transition systems of records and investigated their congruency with respect to the join operator. We proved that finite-traces-based, infinite-traces-based, and nondeterministic  automata based equivalence relations are congruence relations with respect to the join operator, while the Büchi automata based relation is not so. Also, using the positive results we introduced the language theoretic definitions for the join operation for the first three cases and showed that there is no such language based definition for the join operation over the set of  Büchi automata of records.

As the future works, we can investigate the problems of being congruence for some other equivalence relations over the set of labeled transition systems of records with respect to the  join or any other composition operator, especially for the relations that are applicable in the process of compositional minimization and model checking based verification such as failure based equivalences like CFFD and NDFD relations~\cite{Valmari-Faster than,Izadi-IJCM} and weak or strong (bi)simulation relations~\cite{milner book}. As another set of future works, defining action (record) based linear and branching time temporal logic systems over label transition systems of records and investigating that fragment of each temporal logic which is preserved using each one of the above mentioned equivalence relations are of our interests.

\end{document}